\newcommand{\m}[1]{\mathsf{#1}}
\newcommand{\DD}{\mathcal{D}}
\newcommand{\EE}{\mathcal{E}}
\newcommand{\ih}[1]{\mbox{i.h.}(#1)}
\newcommand{\mi}[1]{\mbox{\it #1}}
\newcommand{\CC}{\mathcal{C}}
\newcommand{\up}{{\uparrow}}
\newcommand{\down}{{\downarrow}}
\newcommand{\uup}[2]{{{}^{#1}_{#2}}{{\uparrow}\kern-0.2em{\downarrow}}}
\newcommand{\ddown}[2]{{{}^{#1}_{#2}}{{\downarrow}\kern-0.2em{\uparrow}}}
\newcommand{\uscore}{\mbox{\tt\char`\_}}
\newcommand{\mmode}[1]{{\mathchoice{\m{#1}}{\m{#1}}{\scriptscriptstyle\m{#1}}{\scriptscriptstyle\m{#1}}}}
\newcommand{\mL}{\mmode{L}}
\newcommand{\mU}{\mmode{U}}
\newcommand{\mV}{\mmode{V}}
\newcommand{\mX}{\mmode{X}}
\newcommand{\seq}{\vdash}
\newcommand{\sseq}{\vvdash}
\newcommand{\semi}{\mathrel{;}}
\newcommand{\cut}{\m{cut}}
\newcommand{\id}{\m{id}}
\newcommand{\vvdash}{\mathrel{\vdash\kern-0.8ex\vdash}}
\newcommand{\weaken}{\m{weaken}}
\newcommand{\contract}{\m{contract}}
\newcommand{\lolli}{\multimap}
\newcommand{\tensor}{\otimes}
\newcommand{\with}{\mathbin{\binampersand}}
\newcommand{\one}{\mathbf{1}}
\newcommand{\bang}{{!}}
\newcommand{\upshift}{\mathord{\uparrow}}
\newcommand{\downshift}{\mathord{\downarrow}}
\newcommand{\case}{\mathsf{case}}
\newcommand*{\inferlabel}[1]{\raisebox{1.5ex}{$#1$}}
\theoremstyle{plain}
\newtheorem{example}{Example}
\newtheorem{theorem}{Theorem}
\newtheorem{lemma}{Lemma}
\newtheorem{corollary}{Corollary}
\theoremstyle{definition}
\newtheorem{definition}{Definition}
\newcommand{\pgraph}[1]{\medskip\noindent\textbf{#1}\quad}
\title{A Message-Passing Interpretation of Adjoint Logic}
\author{Klaas Pruiksma
\institute{Carnegie Mellon University}
\institute{Computer Science Department}
\email{kpruiksm@cs.cmu.edu}
\and
Frank Pfenning
    \institute{Carnegie Mellon University}
    \institute{Computer Science Department}
    \email{fp@cs.cmu.edu}
}
\begin{document}
\maketitle

\begin{abstract}
    
We present a system of session types based on \emph{adjoint logic}
which generalize standard binary session types~\cite{Honda93concur}.
Our system allows us to uniformly capture several new behaviors in
the space of asynchronous message-passing communication, including
\emph{multicast}, where a process sends a single message to multiple
clients, \emph{replicable services}, which have multiple clients and replicate
themselves on-demand to handle requests from those clients,
and \emph{cancellation}, where a process discards a channel
without communicating along it. We provide session fidelity and
deadlock-freedom results for this system, from which we then derive
a logically justified form of garbage collection.


\end{abstract}

\section{Introduction}

\emph{Binary session types}~\cite{Honda93concur} were designed to
specify the communication behavior between two message-passing
processes.  But there are patterns of communication that do not fall
into this category.  One example is one provider of a \emph{replicable
  service} with multiple clients.  Another is a \emph{multicast}, that
is, a process sending one message to multiple recipients.  A third one
is a client that no longer wishes to use a service, a form of
\emph{cancellation}.  In this paper we provide a uniform language and
operational semantics rooted in logic that captures such patterns of
asynchronous communication.  It generalizes the usual binary session
types by supporting multiple \emph{modes of communication}.  In each
of these modes every channel has a unique \emph{provider} (which may
send or receive), and possibly multiple clients.  We identify the
following modes: \emph{linear} (a unique client that must communicate,
as with the usual binary session types), \emph{affine} (a unique
client that may communicate or cancel), \emph{strict} (multiple
clients, each of which must communicate), and \emph{unrestricted}
(multiple clients, each of which may or may not communicate, which
captures both replicable services and multicast).

A type system that uniformly integrates all of these patterns is not
obvious if we want to preserve the desirable properties of session
fidelity and deadlock freedom that we obtain from binary session
types. Underlying our approach is \emph{adjoint
  logic}~\cite{Reed09un,Licata16lfcs,Pruiksma18un}, which generalizes
\emph{intuitionistic linear logic}~\cite{Girard87tcs,Girard87tapsoft}
and LNL~\cite{Benton94csl} by decomposing the usual exponential
modality $!A$ into two adjoint modal operators and also affords
individual control over the structural rules of weakening and
contraction. We provide a formulation of adjoint logic in which cut
reduction corresponds to asynchronous communication, and from which
session fidelity and deadlock freedom derive. Moreover, our
formulation uses a form of explicit structural rules embedded in a
multicut, where weakening corresponds to cancellation and contraction
corresponds to duplication of a message or service.

Some of these patterns have been previously addressed with varying
degrees of proximity to an underlying logic.  A replicable service
with multiple clients can be achieved with \emph{access
  points}~\cite{Gay10jfp} or \emph{persistent services} of type
$!A$~\cite{Caires10concur}. Cancellation can be addressed with affine
types~\cite{Mostrous14coordination,Scalas16ecoop,Padovani17icfp}
further developed for asynchronous communication and general handling
of failure~\cite{Fowler19popl}. Cancellation can also be handled with
modalities used to label cancellable types~\cite{Caires2017linearity}.
This approach differs from ours in a few respects --- first,
Caires and P{\'e}rez work in a purely synchronous setting,
without multicast, and second, they focus heavily on introducing
nondeterminism, which we believe to be orthogonal to
(our form of) cancellation.
Closest to the present proposal is a
polarized formulation of asynchronous communication in adjoint
logic~\cite{Pfenning15fossacs} which had several shortcomings that are
addressed here. Specifically, the mode hierarchy was fixed to have
only three modes (linear, affine, and unrestricted), and the
unrestricted mode only allowed a single kind of proposition
${\up}^\mU_m A_m$. This meant that, for example, multicast was not
representable. Also, the rules left weakening and contraction
implicit, which means that there is no explicit cancellation or
distributed garbage collection, which is only briefly hinted at as a
possibility~\cite{Griffith16phd}.

The Curry-Howard correspondence relates propositions to types, proofs
to programs, and proof reduction to computation. Cut reductions in a
pure sequent calculcus for linear
logic~\cite{Caires10concur,Wadler12icfp} naturally correspond to
synchronous communication because both premises of the cut are reduced
at the same time.  We reformulate adjoint logic with a nonstandard
sequent calculus in which noninvertible rules are presented as axioms,
that is, rules with no premises.  As our operational interpretation
shows, an axiom can be seen as a message and cut reduction in this
sequent calculus corresponds to asynchronous communication.  Another
unusual aspect of our sequent calculus is that we generalize cut to a
sound rule of multicut~\cite{Gentzen35,Negri01book}, which operationally allows
one provider to connect with multiple clients.  Two further
consequences of this reformulation are that (a) no explicit rules are
needed for weakening and contraction, and yet (b) channels and
resources are tracked with sufficient precision that computation in a
network of processes ``leaves no garbage'' (see
\cref{sec:metatheory}).  This is the concurrent realization of the
early observation by Girard and Lafont~\cite{Girard87tapsoft} that
functional computation based on intuitionistic linear logic does not
require a garbage collector.
Cancellation~\cite{Mostrous14coordination,Fowler19popl} is a natural
consequence, without requiring any special mechanism, but our system
goes beyond it in the sense that processes with multiple clients will
also terminate once no clients are left.

We begin with a brief discussion of our type system (\cref{sec:typing}), deferring
discussion of the underlying logic to \cref{app:adjoint-logic}, in order to focus on the programming
system. We then present an operational semantics (\cref{sec:operational}): our first major
contribution. It models a variety of asynchronous communication behaviors, uniformly generalizing
previous systems. We close by briefly presenting our results on
session fidelity and deadlock-freedom, along with a brief discussion of the ``garbage-collection''
result that follows from them (\cref{sec:metatheory}).

\section{Language and Typing}
\label{sec:typing}

\begin{figure*}[!htb]
  \begin{small}
    \newcommand{\tc}[1]{{\color{red} #1}}
    \centering
    $\displaystyle
    \begin{array}{c}
    \infer[\id]
    {(a : A_m) \seq \tc{c \leftarrow a} :: (c : A_m)}
    {\mathstrut}
    \qquad
    \infer[\cut(S)]
    {\Psi \ \Psi' \seq \tc{S \leftarrow (\nu x)P; Q} :: (c : C_k)}
    {\Psi \geq m \geq k & |S| \sim m
        &\Psi \seq \tc{P} :: (x : A_m)
        &(S : A_m) \ \Psi' \seq \tc{Q} :: (c : C_k)
    }
    \\[1em]
    
    \infer[{\oplus} R_\ell^0]
    {(a : A_m^\ell) \seq \tc{c.\m{\ell}(a)} :: (c : \mathop{\oplus}\limits_{i \in I} A_m^i)}
    {\ell \in I}
    \qquad
    \infer[{\oplus} L]
    {\Psi \ (a : \mathop{\oplus}\limits_{i \in I} A_m^i) \seq \tc{\m{case}\; a\; (i(x_i) \Rightarrow P_i)_{i \in I}} :: (c : C_k)}
    {\Psi \ (x_i : A_m^i) \seq \tc{P_i} :: (c : C_k) \text{ for each $i \in I$}}
    \\[1em]
    
    \infer[{\with} R]
    {\Psi \seq \tc{\m{case}\; c\; (i(x_i) \Rightarrow P_i)_{i \in I}} :: (c : \mathop{\with}\limits_{i \in I} A_m^j)}
    {\Psi \seq \tc{P_i} :: (x_i : A_m^i) \text{ for each $i \in I$}}
    \qquad
    \infer[{\with} L_\ell^0]
    {(a :\mathop{\with}\limits_{i \in I} A_m^i) \seq \tc{a.\ell(c)} :: (c : A_m^\ell)}
    {\ell \in I}
    \\[1em]
    
    \infer[{\otimes} R^0]
    {(a : A_m) \ (b : B_m) \seq \tc{c.\langle a, b \rangle} :: (c : A_m \otimes B_m)}
    {\mathstrut}
    \qquad
    \infer[{\otimes} L]
    {\Psi \ (a : A_m \otimes B_m) \seq \tc{\m{case}\; a(\langle x, y \rangle \Rightarrow P)} :: (c : C_k)}
    {\Psi \ (x : A_m) \ (y : B_m) \seq \tc{P} :: (c : C_k)}
    \\[1em]
    
    \infer[{\one} R]
    {\cdot \seq \tc{c.\langle \rangle} :: (c : \one_m)}
    {\mathstrut}
    \qquad
    \infer[{\one} L]
    {\Psi \ (a : \one_m) \seq \tc{\m{case}\; a(\langle \rangle \Rightarrow P)} :: (c : C_k)}
    {\Psi \seq \tc{P} :: (c : C_k)}
    \\[1em]
    
    \infer[{\lolli} R]
    {\Psi \seq \tc{\m{case}\; c(\langle x, y \rangle \Rightarrow P)} :: (c : A_m \lolli B_m)}
    {(x : A_m) \ \Psi \seq \tc{P} :: (y : B_m)}
    \qquad
    \infer[{\lolli} L^0]
    {(a : A_m) \ (c : A_m \lolli B_m) \seq \tc{c.\langle a, b \rangle} :: (b : B_m)}
    {\mathstrut}
    \\[1em]
    
    \infer[{\upshift} R]
    {\Psi \seq \tc{\m{case}\; c(\m{shift}(x) \Rightarrow P)} :: (c : \upshift_k^m A_k)}
    {\Psi \seq \tc{P} :: (x :A_k)}
    \qquad
    \infer[{\upshift} L^0]
    {(a : \upshift_k^m A_k) \seq \tc{a.\m{shift}(c)} :: (c : A_k)}
    {\mathstrut}
    \\[1em]
    
    \infer[{\downshift} R^0]
    {(a : A_m) \seq \tc{c.\m{shift}(a)} :: (c : \downshift_k^m A_m)}
    {\mathstrut}
    \qquad
    \infer[{\downshift} L]
    {\Psi \ (a : \downshift_k^m A_m) \seq \tc{\m{case}\; a(\m{shift}(c) \Rightarrow P)} ::(c : C_\ell)}
    {\Psi \ (x : A_m) \seq \tc{P} :: (c : C_\ell)}
    \end{array}
    $
  \end{small}
    \caption{Process Assignment for Asynchronous Adjoint Logic}
    \label{fig:message-passing-typing-rules}
\end{figure*}

Our typing judgment for processes
$P$ is based on \emph{intuitionistic sequents} of the form
\[
  (x^1 : A^1) \cdots (x^n : A^n) \vdash P :: (x : A)
\]
where each of the $x^i$ are \emph{channels} that
$P$ \emph{uses} and $x$ is a channel that
$P$ \emph{provides}.  All of these channels must be distinct and we
abbreviate the collection of antecedents as
$\Psi$.  The \emph{session types} $A^i$ and
$A$ specify the communication behavior that the process
$P$ must follow along each of the channels.

Such sequents are standard for the intuitionistic approach to
understanding binary session types (e.g.,~\cite{Caires10concur}) where
the channels are \emph{linear} in that every channel in a network of
processes has exactly one provider and exactly one client.  In the
closely related formulation based on classical linear
logic~\cite{Wadler12icfp} all channels are on the right-hand side of
the turnstile, but each linear channel still has exactly two
endpoints.

We generalize this significantly by assigning to each channel an
intrinsic \emph{mode} $m$.  Each mode $m$ is assigned a set of
structural properties $\sigma(m)$ among W (for weakening) and C (for
contraction). Separating $m$ from $\sigma(m)$ allows us to have multiple
modes with the same set of structural properties.\footnote{
This allows us, for example, to model the modal logic S4 or lax logic
(the logical origins of comonadic and monadic programming),
each with two modes both satisfying weakening and contraction,
as well as linear analogues of these constructions.}
No matter which structural properties are available for
a channel, each active channel will still have \emph{exactly one
  provider}.  Beyond that, a channel $x_m$ with ${\rm W} \in \sigma(m)$ may
not have any clients.  Furthermore, a channel $x_m$ with
${\rm C} \in \sigma(m)$ may have multiple clients.  All other properties of
our system of session types for processes derive systematically from
these simple principles.

The modes are organized into a preorder where $m \geq k$ requires that
$\sigma(m) \supseteq \sigma(k)$, that is, $m$ must allow more
structural properties than $k$.  In order to guarantee session
fidelity and deadlock freedom, for any sequent
$\Psi \vdash P :: (x_m : A_m)$ is must be the case that for every
$y_k : B_k \in \Psi$ we have $k \geq m$.  For example, if $m$ permits
contraction and therefore $P$ may have multiple clients, then for any
$y_k$ in $\Psi$, mode $k$ must also permit contraction because
(intuitively) if $x_m$ is referenced multiple times then, indirectly,
so is $y_k$. If $k \geq m$ then this is ensured. We express this with
the \emph{presupposition} that
\[
  \Psi \vdash P :: (x_m : A_m)\quad \mbox{requires} \quad \Psi \geq m
\]
where $\Psi \geq m$ simply means $k \geq m$ for every
$y_k : A_k \in \Psi$.  We will only consider sequents satisfying this
presupposition, so our rules, when they are used to break down a
conclusion into the premises, must preserve this fundamental property
which we call \emph{the declaration of independence}.

In our formulation, channels $x_m$ as well as types $A_m$ are endowed
with modes which must always be consistent between a channel and its
type ($x_m : A_m$).  We therefore often omit redundant mode
annotations on channels.

The complete set of rules for the typing judgment are given in
Fig.~\ref{fig:message-passing-typing-rules}.
We first examine the judgmental rules that explain the meaning of
identity and composition.  Identity (rule $\id$) is straightforward:
a process $c \leftarrow a$ providing $c$ defers to the provider of
$a$, which is possible as long as $a$ and $c$ have the same
type and mode.  This is usually called \emph{forwarding}
or \emph{identification} of the channels $a$ and $c$.


The usual logical rule of cut corresponds to the parallel composition of two
processes with a single private channel for communication between
them.
However, ordinary cut is insufficiently general to describe the
situation where a single provider of a channel $x_m$ may have multiple
clients ($C \in \sigma(m)$) or no clients ($W \in \sigma(m)$).  We
therefore generalize it to a form of multicut,
\footnote{The term "multicut" has been used in the literature for different rules.
We follow here the proof theory literature~\cite[Section 5.1]{Negri01book},
where it refers to a rule that cuts out some number of copies of the \emph{same}
proposition A, as in Gentzen's original proof of cut elimination~\cite{Gentzen35},
where he calls it ``Mischung''.}
 where the channel $x_m$
provided by $P$ is known by multiple aliases in the set of channels
$S$ in $Q$ as long as the multiplicity of the aliases is permitted by
the mode.  This is expressed as $|S| \sim m$ and is sufficient for
static typing. Formally, we define this condition by $0 \sim m$ if
$\m{W} \in \sigma(m)$, $1 \sim m$ always, and $k \sim m$ for
$k \geq 2$ if $\m{C} \in \sigma(m)$.  When processes execute we will
have an even more general situation where one provider has multiple
separate client processes, which is captured in the typing judgment
for process configurations (\cref{sec:operational}).

Next we come to the various session types.  From the logical
perspective, these are the propositions of adjoint logic.
\[
\begin{array}{lcl}
A_m, B_m & \Coloneqq & p_m \mid A_m \lolli_m B_m \mid A_m \tensor_m B_m \mid \one_m \mid \mathop{\oplus_m}\limits_{i \in I} A_m^i
\mid \mathop{\with_m}\limits_{i \in I} A_m^i \mid \up_k^m A_k \mid \down_m^\ell A_\ell
\end{array}
\]
Here, $p_m$ stands for atomic propositions at mode $m$.  The other connectives, other than $\up_k^m$
and $\down_m^\ell$, are standard linear logic connectives, except that they are only allowed to
combine types (propositions) at the same mode. Since the mode of a connective can be inferred from the
modes of the types it connects (other than for shifts), we omit subscripts on connectives. Note also that
$\with$ and $\oplus$ have been generalized to $n$-ary forms from the usual binary forms. This is
convenient for programming. We will use a label set $I = \{\pi_1, \pi_2\}$ when working with the
binary forms $A_m \with B_m$ and $A_m \oplus B_m$, where $\pi_1$ selects the left-hand type and
$\pi_2$ selects the right-hand type.
The operational meaning of these connectives (as
discussed further in \cref{sec:operational}) is largely similar to that in past work
(e.g. \cite{Caires10concur}), with $\lolli_m$ and $\tensor_m$ sending channels along other channels,
$\one_m$ sending an end-of-communication message, and $\oplus_m$ and $\with_m$ sending labels. The
shifts send a simple $\m{shift}$ message to signal a transition between modes, either \emph{up} ($\up_k^m$)
from $k$ to some $m \geq k$ or \emph{down} ($\down_m^\ell$) from $\ell$ to some $m \leq \ell$.

We provide proof terms for the rules in our sequent calculus, as shown in
Figure~\ref{fig:message-passing-typing-rules}.  We can then interpret the proof terms as process
expressions, and these rules are used to give the typing judgment for such processes.
\Cref{tbl:proc-meanings} gives the informal meaning of each such process term.

In general, our process syntax represents an intermediate point between a programmer-friendly syntax
and a notation in which it is easy to describe the operational semantics and prove progress and
preservation.  When compared to, for instance, SILL~\cite{Toninho13esop}, the main revisions are
that (1)~we make channel continuations explicit in order to facilitate asynchronous communication
while preserving message order~\cite{DeYoung12csl}, and (2)~we distinguish between an \emph{internal
  name} for the channel provided by a process and \emph{external names} connecting it to multiple
clients.

\begin{table}
    \centering
    \begin{tabular}{lp{4.5in}}
        \toprule
        Process term & Meaning\\
        \midrule
        $a \leftarrow c$  & Identify channels $a$ and $c$.\\
        $S \leftarrow (\nu x)P  \semi Q$ & Spawn a new process $P$ providing channel $x$ with aliases $S$ to be used by $Q$. Here, $x$ is the \emph{internal name} in $P$ for the channel offered by $P$, and $S$ is the set of \emph{external names} of the same channel as used in $Q$. \\[1.2ex]
        $c.\ell(a)$ & Send the label $\ell$ and the channel $a$ along $c$.\\
        $\m{case}\; c (i(x_i) \Rightarrow P_i\}_{i \in I})$ & Receive a label $i$ and a channel $x_i$ from
        $c$, continue as $P_i$.\\[1.2ex]
        $c.\langle a, b \rangle $ & Send the channels $a$ and $b$ along $c$.\\
        $\m{case}\; c(\langle x, y \rangle \Rightarrow P)$ & Receive channels $x$ and $y$ from $c$ to be used in $P$.\\[1.2ex]
        $c.\langle \rangle$ & End communication over $c$ by sending a terminal message.\\
        $\m{case}\; c(\langle \rangle \Rightarrow P)$ & Wait for $c$ to be closed, continue as $P$.\\[1.2ex]
        $c_m.\m{shift}(a_k)$ & Send a shift, from mode $m$ to mode $k$ \\
        $\m{case}\; c_m\; (\m{shift}(x_k) \Rightarrow P)$ & Receive a shift from mode $m$ to mode $k$ \\[0.2ex]
        \bottomrule
    \end{tabular}
    \caption{Informal Meanings of Process Terms}
    \label{tbl:proc-meanings}
\end{table}

\pgraph{Some simple examples.}
We provide here some small examples with their types;
additional examples which highlight more
interesting behavior can be found in \cref{sec:example}.

First, we have a process that can be written at any mode $m$, which witnesses
that $\otimes_m$ is commutative.
\begin{tabbing}
    $(x : A_m \otimes B_m) \vdash \m{case}\; x\,(\langle y, x' \rangle \Rightarrow z.\langle x', y \rangle) :: (z : B_m \otimes A_m)$
\end{tabbing}

If $m$ is a mode that admits contraction, we can write the following process,
which witnesses that $A_m \with B_m$ proves $A_m \otimes B_m$ in the
presence of contraction.  `$\%$' starts a comment.
\begin{tabbing}
    $(p : A_m \with B_m) \vdash$
    \= $\{p_1, p_2\} \leftarrow (\nu q)\, (q \leftarrow p); \qquad$
    \= $\%\ \{p_1, p_2\} \leftarrow \m{copy}\; p$\\
    \> $x \leftarrow (\nu a)\, p_1.\pi_1(a);$\\
    \> $y \leftarrow (\nu b)\, p_2.\pi_2(b);$\\
    \> $z.\langle x, y \rangle$ \> $:: (z : A_m \otimes B_m)$
\end{tabbing}

If $m$ is a mode that admits weakening, we can write the following process,
which witnesses that $A_m \otimes B_m$ proves $A_m \with B_m$ in the
presence of weakening.
\begin{tabbing}
    $(x : A \otimes B) \vdash$
    \= $\m{case}\; p\;$
    \= $($\,%
    \= $\pi_1(p_1) \Rightarrow$
    \= $\m{case}\; x\,(\langle y, z \rangle \Rightarrow$\\
    \>\>\>\> $\{\,\} \leftarrow (\nu a)\, (a \leftarrow z); \qquad \quad$
    \= $\%\ \m{drop}\; z$\\
    \>\>\>\> $p_1 \leftarrow y)$\\
    \>\> $|\,\pi_2(p_2) \Rightarrow$
    \= $\m{case}\; x\,(\langle y, z \rangle \Rightarrow$\\
    \>\>\>\> $\{\,\} \leftarrow (\nu a)\, (a \leftarrow y); \qquad \quad$
    \= $\%\ \m{drop}\; y$\\
    \>\>\>\> $p_2 \leftarrow z \,))$\\
    $:: (p : A \with B)$
\end{tabbing}

\section{Operational Semantics}
\label{sec:operational}

\newcommand{\proc}{\m{proc}}

In order to describe the computational behavior of process
expressions, we need to first give some syntax for the computational
artifacts, which are running processes $\proc(S, \Delta, a, P)$.  Such
a process executes $P$ and provides a channel $a$ while using the channels
in the channel set $\Delta$. $S$ is a set of aliases for the channel $a$,
which can be referred to by one or more clients. Each alias $c \in S$ is used
by at most one client, but one client may use multiple such aliases. Note that
as the aliases in $S$ are the only way to interact with the channel $a$ from
an external process, the objects $\proc(S, \Delta, a, P)$ and
$\proc(S, \Delta, b, P[b/a])$ are equivalent --- changing the internal name of
a process has no effect on its interactions with other processes.

A \emph{process configuration} is a multiset of processes:
\[
\begin{array}{lcl}
\CC & \Coloneqq & \proc(S, \Delta, a, P) \mid (\cdot) \mid \CC\ \CC'
\end{array}
\]
where we require that all the aliases or names provided by the processes
$\proc(S, \Delta, a, P)$ are distinct, i.e., given objects
$\proc(S, \Delta_1, a, P)$ and $\proc(T, \Delta_2, b, Q)$ in the same process
configuration, $S$ and $T$ are disjoint.  We will specify the
operational semantics in the form of
\emph{multiset rewriting rules}~\cite{Cervesato09ic}.
That means we show how to rewrite some subset of the configuration
while leaving the remainder untouched. This form provides some
assurance of the locality of the rules.

It simplifies the description of the operational semantics if for any
process $\proc(S, \Delta, a, P)$, $\Delta$ consists of exactly the free
channels (other than $a$) in $P$. This requires that we restrict the labeled internal
and external choices, $\mathop{\oplus}\limits_{i \in I} A^i_m$ and
$\mathop{\with}\limits_{i \in I} A^i_m$ to the case where
$I \neq \emptyset$. Since a channel of empty choice type can never
carry any messages, this is not a significant restriction in practice.


In order to understand the rules of the operational semantics, it will
be helpful to understand the typing of configurations.  The judgment
has the form $\Psi \vDash \CC :: \Psi'$ which expresses that using the
channels in $\Psi$, configuration $\CC$ provides the channels in
$\Psi'$. This allows a channel that is not mentioned at all in $\CC$
to appear in both $\Psi$ and $\Psi'$---we think of such a channel as
being ``passed through'' the configuration.

Note that while the configuration typing rules induce an ordering on a
configuration, the configuration itself is not inherently ordered.
The key rule is the first: for any object $\proc(S, \Delta, a, P)$ we
require that $P$ is well-typed on some subset of the
available channels while the others are passed through.  Here we write
$\overline{\Psi}$ for the set of channels declared in $\Psi$, which
must be exactly those used in the typing of $P$.  Moreover,
\emph{externally} such a process provides the channels
$S = \{a^1_m, \ldots, a^n_m\}$, all of the same type $A_m$.  We use
the abbreviation $(S : A_m)$ for $a^1_m : A_m, \ldots, a^n_m : A_m$.
Finally, we enforce that the number of clients must be compatible with
the mode $m$ of the offered channel, which is exactly that
$|S| \sim m$, as defined in \cref{sec:typing}.
\[
\infer[\m{Proc}]
{\Psi\ \Psi' \vDash \proc(S, \overline{\Psi'}, a, P) :: \Psi\ (S : A_m)}
{|S| \sim m
    &\Psi' \seq P :: (a : A_m)
}
\qquad
\infer[\m{Id}]
{\Psi \vDash (\cdot) :: \Psi}
{\mathstrut}
\qquad
\infer[\m{Comp}]
{\Psi \vDash \mathcal C\ \mathcal C' :: \Psi''}
{\Psi \vDash \mathcal C :: \Psi'
    &\Psi' \vDash \mathcal C' :: \Psi''
}
\]
The identity and composition rules are straightforward.
The empty context $(\cdot)$ provides $\Psi$ if given $\Psi$, since it does not
use any channels in $\Psi$ or provide any additional channels.
Composition just connects configurations with compatible interfaces:
what is provided by $\CC$ is used by $\CC'$.


\begin{figure*}[!tbh]
    \centering
    $\displaystyle
    \begin{array}{rcl}
    \begin{array}{r}
    \proc(T \cup \{c\}, \Delta, x, P)\\
    \proc(S, \{c\}, y, y \leftarrow c)\\
    \end{array}
    &\overset{\id}{\Longrightarrow}&
    \begin{array}{l}
    \proc(T \cup S, \Delta, x, P)\\
    \end{array}
    \\[1.2em]
    
    \begin{array}{r}
    \proc(T, \Delta_P \cup \Delta_Q, y, S \leftarrow (\nu x)P; Q)\\
    (\text{$S'$ a fresh set of channels matching $S$})
    \end{array}
    &\overset{\cut(S)}{\Longrightarrow}&
    \begin{array}{l}
    \proc(S', \Delta_P, x, P) \\
    \proc(T, \Delta_Q \cup \{S'\}, y, Q[S'/S])\\
    \end{array}
    \\[1.2em]
    
    \begin{array}{r}
    (P \text{ not an identity}) \qquad \proc(\emptyset, \Delta, x, P)
    \end{array}
    &\overset{\m{drop}}{\Longrightarrow}&
    \begin{array}{l}
    \proc(\emptyset, \{b\}, y, y \leftarrow b)_{b \in \Delta} \\
    \end{array}
    \\[1.2em]
    
    \begin{array}{r}
    \proc(S \cup T, \Delta, x, P)\\
    (\text{$P$ not an identity and $S, T$ non-empty})
    \end{array}
    &\overset{\m{copy}}{\Longrightarrow}&
    \begin{array}{l}
    \proc(\{b', b''\}, \{b\}, y, y \leftarrow b)_{b \in \Delta}\\
    \proc(S, \{b'\}_{b \in \Delta}, x, P[b'/b])\\
    \proc(T, \{b''\}_{b \in \Delta}, x, P[b''/b])\\
    \end{array}
    \\[1.2em]
    
    \begin{array}{r}
    \proc(\{b\}, \{c\}, x, x.\ell(c))\\
    \proc(S, \Delta \cup \{b\}, z, \m{case}\; b(i(y_i) \Rightarrow P_i)_{i \in I})\\
    \end{array}
    &\overset{{\oplus}\; C}{\Longrightarrow}&
    \begin{array}{l}
    \proc(S, \Delta \cup \{c\}, z, P_\ell[c/y_\ell])\\
    \end{array}
    \\[1.2em]
    
    \begin{array}{r}
    \proc(\{b\}, \Delta, x, \case\; x(i(y_i) \Rightarrow P_i)_{i \in I})\\
    \proc(\{c\}, \{b\}, z, b.\ell(z))\\
    \end{array}
    &\overset{{\with}\; C}{\Longrightarrow}&
    \begin{array}{l}
    \proc(\{c\}, \Delta, z, P_\ell[z/y_\ell])\\
    \end{array}
    \\[1.2em]
    
    \begin{array}{r}
    \proc(\{b\}, \{c, d\}, w, w.\langle c, d \rangle)\\
    \proc(S, \Delta \cup \{b\}, z, \m{case}\; b(\langle x, y \rangle \Rightarrow P)\\
    \end{array}
    &\overset{{\otimes}\; C}{\Longrightarrow}&
    \begin{array}{l}
    \proc(S, \Delta \cup \{c, d\}, z, P[c/x, d/y])\\
    \end{array}
    \\[1.2em]
    
    \begin{array}{r}
    \proc(\{b\}, \Delta, w, \m{case}\; w(\langle x, y \rangle \Rightarrow P)\\
    \proc(\{c\}, \{b, d\}, z, b.\langle d, z \rangle)\\
    \end{array}
    &\overset{{\lolli}\; C}{\Longrightarrow}&
    \begin{array}{l}
    \proc(\{c\}, \Delta \cup \{d\}, z, P[d/x, z/y])\\
    \end{array}
    \\[1.2em]
    
    \begin{array}{r}
    \proc(\{b\}, \emptyset, x, x.\langle \rangle)\\
    \proc(S, \Delta \cup \{b\}, y, \m{case}\; b (\langle \rangle \Rightarrow P))\\
    \end{array}
    &\overset{{\one}\; C}{\Longrightarrow}&
    \begin{array}{l}
    \proc(S, \Delta, y, P)\\
    \end{array}
    \\[1.2em]
    
    \begin{array}{r}
    \proc(\{b_k\}, \{c_m\}, x_k, x_k.\m{shift}(c_m))\\
    \proc(S, \Delta \cup \{b_k\}, y, \m{case}\; b_k(\m{shift}(z_m) \Rightarrow P))\\
    \end{array}
    &\overset{{\down_k^m}\; C}{\Longrightarrow}&
    \begin{array}{l}
    \proc(S, \Delta \cup \{c_m\}, y, P[c_m/z_m])\\
    \end{array}
    \\[1.2em]
    
    \begin{array}{r}
    \proc(\{b_m\}, \Delta, x_m, \case\; x_m(\m{shift}(z_k) \Rightarrow P))\\
    \proc(\{c_k\}, \{b_m\}, y_k, b_m.\m{shift}(y_k))\\
    \end{array}
    &\overset{{\up_k^m}\; C}{\Longrightarrow}&
    \begin{array}{l}
    \proc(\{c_k\}, \Delta, y_k, P[y_k/z_k])\\
    \end{array}
    \\[1.2em]
    \end{array}
    $
    \caption{Computation Rules for Asynchronous Adjoint Logic}
    \label{fig:dynamics1}
\end{figure*}

The computation rules we discuss in this section can be found in
Figure~\ref{fig:dynamics1}.
Remarkably, the computation rules do not depend on the modes, although
some of the rules will naturally only apply at modes satisfying
certain structural properties.

\pgraph{Judgmental rules.}
The identity rule (written as $\overset{\id}{\Longrightarrow}$)
describes how an identity process (for instance,
${\proc(S, \{c\}, a, a \leftarrow c)}$) may interact with other
processes.  We think of such a process as connecting the provider of
$c$ to clients in $S$, and therefore sometimes call it a
\emph{forwarding process}.
A forwarding process interacts with
the provider of $c$, telling it to replace $c$ with $S$ in its set of
clients. In adding $S$ to the set of clients, the forwarding process
accomplishes its goal of connecting the provider of $c$ to $S$, and so
it can terminate.

The cut rule steps by spawning a new process which offers along a
fresh set of channels $S'$, all of which are used in $Q$, the
continuation of the original process.  Here we write $\Delta_P$ and
$\Delta_Q$ for the set of free channels in $P$ and $Q$, respectively.

\pgraph{Structural rules.}
A process with no clients can terminate (rule
$\overset{\m{drop}}{\Longrightarrow}$), but must notify all of the
processes it uses that they should also terminate.  It does so by
sending each one a forwarding message, effectively embodying a
cancellation.  In concert with the identity rule this accomplishes
cascading cancellation in the distributed setting.  Note that the mode
$m$ of channel $a$ must admit weakening in order for the process on
the left-hand side of the rule to be well-typed.

Similarly, a process with multiple clients can spawn a copy of itself,
each with a strictly smaller set of clients (rule
$\overset{\m{copy}}{\Longrightarrow}$).  If the process $P$ is a
replicable service, that is, if it has a negative type $\with$,
$\lolli$, ${\up}^m_k$, then this corresponds to actual process
replication.  If it has a positive type $\oplus$, $\otimes$, $\one$,
${\down}^m_k$, this corresponds to duplicating a multicast message
into copies for different subsets of recipients.  The mode
$m$ of the channel $a$ must admit contraction in order for the process
on the left-hand side of the rule to be well-typed.

While both the $\m{drop}$ and $\m{copy}$ rules can be applied to any process
with $0$ or multiple clients, respectively, this does not cause any problems
as long as we forbid them from executing on identity processes. If we apply drop
or copy to an identity process, we end up with another process of the same form
on the right-hand side of the rule, and so we could repeatedly apply drop or
copy and not make any progress. As such, we forbid this use of the drop and copy
rules.

For any other type of process, regardless of whether we drop/copy first or execute
another communication rule first, we can eventually reach the same state, and so
we do not need to make additional restrictions (though an actual implementation would
likely pick either a maximally eager or a maximally lazy strategy for applying
these rules). 

\pgraph{Additive and multiplicative connectives.}
In the rule for $\oplus$, the process
$\m{proc}(\{b\}, \{c\}, a, a.\ell(c))$ represents the
message `label~$\ell$ with continuation~$c$'.  After this message has
been received, the process terminates since $b$ was its only client.
The recipient selects the appropriate branch of the $\m{case}$
construct and also substitutes the continuation channel $c$ for the
continuation variable $d_\ell$.

The $\with$ computation rule is largely similar to that for
$\oplus$, except that communication proceeds in the opposite
direction---messages are sent \emph{to} providers \emph{from} clients, rather
than from providers to clients as in the case of $\oplus$.

The multiplicative connectives $\otimes$ and $\lolli$ behave similarly
to their additive counterparts, except that rather than sending and
receiving labels, they send and receive channels together with a
continuation, and so an extra substitution is required when receiving
messages, while the $\one$ behaves as a nullary $\otimes$, allowing us to
signal that no more communication is forthcoming along a channel, and
to wait for such a signal before continuing to compute.

\pgraph{Shifts.}
We present the computation rules for shifts with modes marked explicitly
on the relevant channels. Channels whose modes are unmarked may be at any
mode (provided, of course, that the declaration of independence is respected).

Operationally, $\upshift$ behaves essentially the same as unary
$\with$, while $\downshift$ behaves as unary $\oplus$.
Their significance lies in the \emph{mode shift} of the continuation
channel that is transmitted, which is required for the configuration
to remain well-typed.

The messages $\m{shift}(a_k)$ or $\m{shift}(c_m)$ should be thought of as
signaling a transition between modes --- to mode $k$ for the former, and to mode
$m$ for the latter. Whether the transition is up or down depends on which direction
the message is being sent in. As with other messages (in particular, the messages
for $\oplus$ and $\with$), the continuation channels are made explicit.

\section{Session Fidelity, Deadlock-Freedom, and Garbage Collection}
\label{sec:metatheory}

While we can prove cut elimination for the form of adjoint logic presented
in \cref{app:adjoint-logic}, from a programmer's perspective we
are not interested in eliminating all cuts (which would
correspond to reducing under $\lambda$-abstractions in a functional
language) but rather we block when waiting to receive a message, analogous to
a $\lambda$-abstraction waiting for input before it can reduce.
What we prove instead are session fidelity and deadlock-freedom.

\pgraph{Session fidelity.}
The session fidelity theorem follows from a case analysis on the computation
rule used to get that $\mathcal C \Rightarrow \mathcal C'$. In each case, we
break $\mathcal C$ down to find the processes on which the computation rule acts,
along with some collections of processes which are unaffected by the computation.
From these pieces, we build a proof that $\Psi \vDash \mathcal C' :: \Psi'$.

\begin{theorem}[Session Fidelity]
    If $\Psi \vDash \mathcal C :: \Psi'$ and $\mathcal C \Rightarrow \mathcal C'$, then $\Psi \vDash \mathcal C' :: \Psi'$. 
\end{theorem}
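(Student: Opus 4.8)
The proof proceeds by a case analysis on the computation rule justifying $\mathcal C \Rightarrow \mathcal C'$ (the rules of \cref{fig:dynamics1}); once the active processes have been isolated, no induction on $\mathcal C$ is required. The plan rests on two structural observations about the configuration typing judgment. First, a derivation of $\Psi \vDash \mathcal C :: \Psi'$ is built only from $\m{Comp}$ and $\m{Proc}$, so it amounts to enumerating the processes of the multiset $\mathcal C$ in an order consistent with the channel dependencies imposed by $\m{Proc}$: each process is listed after the providers of the channels it uses, with every other channel threaded through as the passed-along $\Psi$ component. Second, two \emph{adjacent} processes in such an enumeration that are independent (neither uses a channel the other provides) may be swapped without affecting typability, since $\m{Comp}$ is associative and $\m{Proc}$ threads unused channels through unchanged.

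I would first establish a \textbf{locality lemma}: the processes matched on the left-hand side of a computation rule can always be brought into a contiguous block in some valid enumeration. For the communication rules ($\oplus\,C$, $\with\,C$, $\otimes\,C$, $\lolli\,C$, $\one\,C$, and the two shift rules) the message process offers a single alias that is used only by its partner $\m{case}$ process, so it has a unique successor and can be floated down (by repeated independent swaps) until it is adjacent to that partner; the $\id$ rule is analogous, since the forwarder's only dependency is the channel $c$ provided by the process it merges into. Granting this, $\mathcal C$ factors as $\mathcal D_1\,\mathcal B\,\mathcal D_2$ with $\Psi \vDash \mathcal D_1 :: \Phi_1$, $\Phi_1 \vDash \mathcal B :: \Phi_2$, and $\Phi_2 \vDash \mathcal D_2 :: \Psi'$, where $\mathcal B$ is exactly the active block. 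Because $\mathcal C'$ replaces $\mathcal B$ by the right-hand block $\mathcal B'$ and leaves $\mathcal D_1, \mathcal D_2$ untouched, it then suffices to prove, for each rule, that $\Phi_1 \vDash \mathcal B' :: \Phi_2$; recomposing with the unchanged $\mathcal D_1, \mathcal D_2$ via $\m{Comp}$ yields $\Psi \vDash \mathcal C' :: \Psi'$.

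The per-rule step has two parts. First, \emph{interface preservation}: I check that $\mathcal B$ and $\mathcal B'$ consume the same channels (with the same types) and offer the same external channels, so that $\Phi_1, \Phi_2$ are valid endpoints for $\mathcal B'$ as well. For instance, in $\oplus\,C$ the internal alias $b$ is both produced and consumed within each block, leaving used channels $\Delta \cup \{c\}$ and offered channels $S$ on either side; the $\cut$, $\id$, $\m{drop}$, and $\m{copy}$ rules are verified the same way, noting that in $\m{drop}$ and $\m{copy}$ the freshly generated forwarders at each $b \in \Delta$ produce and consume precisely the channels linking them to the copies of $P$. Second, \emph{well-typedness of $\mathcal B'$}: I invert $\m{Proc}$ on each active process of $\mathcal B$ to recover its process typing $\Delta'' \seq P :: (a : A_m)$, then invert the corresponding rule of \cref{fig:message-passing-typing-rules} to extract the continuation typings (e.g.\ inverting $\oplus L$ yields $\Psi\,(y_\ell : A^\ell_m) \seq P_\ell$, into which the received channel is substituted; inverting $\cut(S)$ yields the separate typings of $P$ and $Q$), and finally reassemble the right-hand processes with $\m{Proc}$. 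Along the way I must re-establish the multiplicity side conditions $|{\cdot}| \sim m$: for $\id$ the merged client set $T \cup S$ must satisfy $\sim m$, which follows by a short case analysis from $|T \cup \{c\}| \sim m$ and $|S| \sim m$; for $\m{copy}$ the split sets $S$ and $T$ each inherit $\sim m$, while the new contraction (resp.\ weakening, for $\m{drop}$) demands on the channels $b \in \Delta$ are discharged exactly by the \emph{declaration of independence} ($k \geq m$ implies $\sigma(k) \supseteq \sigma(m)$), which guarantees that every used channel admits whatever structural rule the offered channel does.

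I expect the \textbf{main obstacle} to be the locality lemma, together with the bookkeeping of the multiplicity conditions. Isolating the active processes is the only genuinely structural move: it requires showing that the message (or forwarder) process can be commuted past the intervening processes of $\mathcal D_1$ and $\mathcal D_2$ while keeping a valid dependency order, and this is precisely where the ``exactly one provider, at most one client per alias'' discipline of the system is needed to guarantee that the floated process has a unique successor. Once the block is contiguous, interface preservation and the inversions are routine; the one recurring subtlety is that $\id$, $\m{drop}$, and $\m{copy}$ alter the cardinalities of client sets, so each demands a fresh check of $\sim m$ and, for $\m{drop}$ and $\m{copy}$, an appeal to the declaration of independence to license the structural behavior required of the channels in $\Delta$.
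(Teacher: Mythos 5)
Your proposal is correct and follows essentially the same route as the paper: a case analysis on the computation rule, a decomposition of the configuration-typing derivation to isolate the active processes from the unaffected remainder, and a reassembly of the typing for $\mathcal C'$ via inversion on $\m{Proc}$ and the process-typing rules. Your explicit ``locality lemma'' and the checks of the $|S| \sim m$ side conditions (including the appeal to the declaration of independence for $\m{drop}$ and $\m{copy}$) are just a more detailed rendering of the paper's ``break $\mathcal C$ down to find the processes on which the computation rule acts'' step.
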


\pgraph{Deadlock-freedom.}
The progress theorem for a functional language states that an expression
is either a value or it can take a step.  Here we do not have values, but
there is nevertheless a clear analogue between, say, a value $\lambda x.e$
that waits for an argument, and a process $\m{case}\; x\, (\langle y, z \rangle \Rightarrow P)$
that waits for an input. We formalize this in the definition below.

\begin{definition}
    We say that a process $\proc(S, \Delta, a, P)$ is \emph{poised} on $a$ if:
    \begin{enumerate}
        
        \item it is a process $\proc(S, \Delta, a, P)$ that sends on $a$ --- that is, $P$ is of
        the form $(a.\uscore)$, or
        \item it is a process $\proc(S, \Delta, a, P)$ that receives on $a$ --- that is, $P$ is
        of the form $(\m{case}\; a\; (\uscore))$.
    \end{enumerate}
\end{definition}

Intuitively, $\proc(S, \Delta, a, P)$ is poised on $a$ if it is blocked
trying to communicate along $a$. This definition allows us to state the
following progress theorem:

\begin{theorem}[Deadlock-Freedom]
    If $(\cdot) \vDash \mathcal C :: \Psi$, then exactly one of the following holds:
    \begin{enumerate}
        \item There is a $\mathcal C'$ such that $\mathcal C \Rightarrow \mathcal C'$.
        \item Every $\proc(S, \Delta, a, P)$ in $\mathcal C$ is poised on $a$.
    \end{enumerate}
\end{theorem}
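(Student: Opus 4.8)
The plan is to prove the substantive (progress) direction — that at least one of the two cases holds — by first discharging all the \emph{non-communicating} rules and then analyzing the residual configuration of pure communication prefixes; mutual exclusivity is then a short separate check. Concretely, I would first show: if any process in $\CC$ is a $\cut$, an identity, a non-identity with no clients ($S = \emptyset$), or a non-identity with several clients ($|S| \geq 2$), then one of the rules $\cut$, $\id$, $\m{drop}$, $\m{copy}$ fires and we are in case~1. The only nonobvious point here is the identity case: to apply $\id$ we need the provider of the forwarded channel to be present, but since $(\cdot) \vDash \CC :: \Psi$ has empty left-hand side, every channel a process \emph{uses} is provided \emph{inside} $\CC$, so that provider is available.

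After this step every remaining process $\proc(S,\Delta,a,P)$ has exactly one client ($|S| = 1$) and $P$ is a single communication prefix, acting on one channel. By inspecting the right/left rules of Figure~\ref{fig:message-passing-typing-rules} against the proof terms, each such $P$ either acts on its own provided channel $a$ — in which case it is \emph{poised} — or it acts on a used channel $b \in \Delta$, in which case it is a client action and is \emph{not} poised. If no process of the latter kind exists, every process is poised and we are in case~2. Otherwise I would locate a redex as follows.

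The key structural fact is that the configuration typing gives a well-founded order on processes: reading $\Psi \vDash \CC :: \Psi'$ through its $\m{Comp}$/$\m{Proc}$ decomposition with empty input, the processes can be listed $p_1, \dots, p_N$ so that each $p_i$ uses only channels provided by $p_1, \dots, p_{i-1}$. I would then take the \emph{earliest} non-poised process $p_i$; it acts on some used channel $b$, whose unique provider is some $p_j$ with $j < i$. By minimality $p_j$ is poised, hence actively sending or receiving on the channel that $b$ names, and (after the administrative step) its single client is exactly $p_i$ via the alias $b$. A polarity check closes the case: if $b$ carries a positive type ($\oplus$, $\otimes$, $\one$, $\downshift$) then $p_i$ is receiving and $p_j$ is sending, while if $b$ carries a negative type ($\with$, $\lolli$, $\upshift$) then $p_i$ is sending and $p_j$ is receiving; in either situation $p_j$ and $p_i$ match exactly the left- and right-hand sides of the corresponding communication rule, which therefore fires, yielding case~1.

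I expect the main obstacle to be establishing the well-founded order rigorously — i.e.\ that a well-typed configuration with empty input has no cyclic channel dependencies — since this is what rules out genuine deadlock; it rests on the unique-provider discipline and the layering enforced by $\m{Proc}$ (each process using exactly the channels of its own segment), with the declaration of independence keeping the modes consistent along the chain. The accompanying polarity bookkeeping, while routine, must be carried out uniformly for all the connectives and the two shifts. Finally, for mutual exclusivity I would note that when every process is poised no process performs a client action on a used channel, so no communication, $\id$, or $\cut$ rule applies; the only residual possibility is $\m{drop}$ or $\m{copy}$ on a poised process with $0$ or $\geq 2$ clients, which is excluded precisely once the structural rules have been saturated, so that case~2 is read as the normal form with respect to $\m{drop}$ and $\m{copy}$.
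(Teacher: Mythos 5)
Your argument is essentially the paper's: the crucial ingredient in both is the ordering on processes induced by configuration typing (the paper's Lemma~1, recasting $\vDash$ as the right-to-left $\vDash'$), which lets you pair a non-poised process with its necessarily poised provider, combined with the same case analysis on $|S|$ and inversion on the typing of $P$ to match send/receive polarities. The paper packages this as a single induction over the $\vDash'$ derivation (rightmost process against an all-poised prefix) rather than your two-phase ``saturate $\m{drop}$/$\m{copy}$/$\cut$/$\id$, then take the leftmost non-poised process'' organization, but the content is the same --- including the caveat, which you rightly flag, that exclusivity of the two cases only holds once case~2 is read as $\m{drop}$/$\m{copy}$-normal.
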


In order to prove this theorem, we first prove a lemma
allowing us to take advantage of the ordering induced
by configuration typing. We note that if object $\psi$
is a client of object $\phi$, $\psi$ must occur to the right
of $\phi$ in the ordering, and so if we can analyze a
configuration from right to left, we consider each process
before (or after, depending on your view of induction) all
of its dependencies. To formalize this,
we present a second set of rules defining another form
of configuration typing (which will turn out to prove the same
judgments as the original form).
\[
\infer[\m{Empty}]
{\Psi \vDash' (\cdot) :: \Psi}
{\mathstrut}
\qquad
\infer[\m{Extend}]
{\Psi \vDash' \mathcal C\  \proc(S, \overline{\Psi'}, a, P) :: \Psi''\ (S : A_m)}
{|S| \sim m
    &\Psi \vDash' \mathcal C :: \Psi'\ \Psi''
    &\Psi' \seq P :: (a : A_m)
}
\]
It is clear that if $\vDash$ and $\vDash'$ are the same, then
we can perform induction using the $\m{Empty}$ and $\m{Extend}$ rules
rather than the $\m{Id}$, $\m{Comp}$, and $\m{Proc}$ rules, allowing us
to analyze a configuration from right to left. We formalize this as
\cref{lem:config-order}.

\begin{lemma}
\label{lem:config-order}
    $\Psi \vDash \mathcal C :: \Psi'$ if and only if $\Psi \vDash' \mathcal C :: \Psi'$.
\end{lemma}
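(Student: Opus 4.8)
The plan is to prove the two directions of the biconditional separately, each by a straightforward induction on the given derivation, after fixing a few conventions that make the context bookkeeping invisible. I treat the antecedent/succedent contexts $\Psi, \Psi', \ldots$ as finite collections of channel declarations with pairwise distinct channel names, so that juxtaposition $\Psi\ \Psi'$ is a disjoint union and is insensitive to order; similarly, configurations are multisets, so $\mathcal C\ \mathcal C'$ is associative and commutative. With these conventions the context splittings demanded by $\m{Proc}$, $\m{Comp}$, and $\m{Extend}$ can be matched up freely, and the whole argument reduces to re-deriving the inference rules of one system inside the other.

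For the direction $\vDash' \Rightarrow\ \vDash$, I would induct on the derivation of $\Psi \vDash' \mathcal C :: \Psi'$. The $\m{Empty}$ case is discharged immediately by $\m{Id}$. In the $\m{Extend}$ case, the conclusion is $\Psi \vDash' \mathcal C\ \proc(S, \overline{\Gamma}, a, P) :: \Delta\ (S : A_m)$ with subderivations $\Psi \vDash' \mathcal C :: \Gamma\ \Delta$ and $\Gamma \seq P :: (a : A_m)$ and side condition $|S| \sim m$. The induction hypothesis yields $\Psi \vDash \mathcal C :: \Gamma\ \Delta$. Applying $\m{Proc}$ with pass-through context $\Delta$ and used context $\Gamma$ gives $\Gamma\ \Delta \vDash \proc(S, \overline{\Gamma}, a, P) :: \Delta\ (S : A_m)$, and then $\m{Comp}$ composes these two $\vDash$-derivations into the desired $\Psi \vDash \mathcal C\ \proc(S, \overline{\Gamma}, a, P) :: \Delta\ (S : A_m)$. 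Since $\vDash$ has composition as a primitive rule, no auxiliary lemma is needed here.

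The reverse direction $\vDash \Rightarrow\ \vDash'$ is again by induction on the derivation, with $\m{Id}$ handled by $\m{Empty}$. A single $\m{Proc}$ conclusion $\Phi\ \Gamma \vDash \proc(S, \overline{\Gamma}, a, P) :: \Phi\ (S : A_m)$ is rebuilt by applying $\m{Extend}$ to the $\m{Empty}$ derivation $\Phi\ \Gamma \vDash' (\cdot) :: \Phi\ \Gamma$, splitting its output as used context $\Gamma$ (matching $\Gamma \seq P :: (a : A_m)$) and pass-through context $\Phi$. The only genuinely non-trivial step is the $\m{Comp}$ case, which I expect to be the main obstacle: $\vDash'$ has no composition rule, so I must supply it as an admissible principle. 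I would therefore prove a composition lemma for $\vDash'$, namely that $\Psi \vDash' \mathcal C :: \Theta$ and $\Theta \vDash' \mathcal C' :: \Xi$ imply $\Psi \vDash' \mathcal C\ \mathcal C' :: \Xi$, by induction on the \emph{second} derivation (the one over $\mathcal C'$), since $\vDash'$ builds configurations up from the right. Its $\m{Empty}$ case is immediate, and its $\m{Extend}$ case peels the rightmost process off $\mathcal C'$, threads $\Psi \vDash' \mathcal C$ through via the inner induction hypothesis, and reapplies $\m{Extend}$, using associativity of multiset union to reassociate $(\mathcal C\ \mathcal D)\ \proc(\ldots)$ as $\mathcal C\ \mathcal C'$. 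With this lemma in hand, the $\m{Comp}$ case of the main induction follows directly from the two induction hypotheses, completing the equivalence.
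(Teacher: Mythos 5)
Your proof is correct and follows essentially the same route as the paper: both directions reduce to deriving each system's rules in the other, with the only substantive step being the admissibility of $\m{Comp}$ for $\vDash'$, which you (like the paper) establish by induction on the right-hand premise. No gaps.
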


This lemma is nearly immediate --- all of the rules for $\vDash'$ are derivable from the rules
of $\vDash$, and all rules of $\vDash$ but $\m{Comp}$ are derivable from the rules of $\vDash'$.
We therefore need only show (by an induction over the right-hand premise) that the version of the
$\m{Comp}$ rule with $\vDash$ replaced by $\vDash'$ is admissible.

The proof of deadlock-freedom then proceeds by an induction on the derivation of
$(\cdot) \vDash \mathcal C :: \Psi$, using \cref{lem:config-order} to work
right to left. Writing $\mathcal C = \mathcal C'\; \proc(S, \overline{\Psi'}, a.P)$,
we see that either $\mathcal C'$ can step, in which case so can
$\mathcal C$, or every process in $\mathcal C'$ is poised.
Now we carefully distinguish cases on $S$ (empty, singleton, or
greater) and apply inversion to the typing of $P$ to see that in
each case the process either is poised, can take a step independently,
or can interact with provider of a channel in $\overline{\Psi'}$.

\pgraph{Garbage collection.}
As we can see from the preservation theorem, the interface to a
configuration never changes.  While new processes may be spawned, they
will have clients and are therefore not visible at the interface.
This is in contrast to the semantics of shared channels in prior work
(for example, in~\cite{Caires10concur,Pfenning15fossacs}) where shared
channels may show up as newly provided channels.  Therefore they may
be left over at the end of a computation without any clients.

This cannot happen here.  Initially, at the top level, we envision
starting with the configuration below on the left. Assuming this computation
completes, by the progress property and the definition of \emph{poised}, computation
could only halt with the configuration on the right. In other words: no garbage!
\[ \cdot \vDash \proc(\{c_0\}, \cdot, c, P_0) :: (c_0 : \one) \qquad \qquad
\cdot \vDash \proc(\{c_0\}, \cdot, c, c.\langle \rangle) :: (c_0 : \one)\]

One can generalize this to allow nontrivial output by
allowing any purely positive type (that is, one which only uses the fragment
of the logic with connectives $\oplus$, $\otimes$, $\one$, and $\downshift$),
such as $\oplus\{\m{false} : \one, \m{true} : \one\}$.

We can formalize this intuition by defining an \emph{observable} configuration $\mathcal C$
which corresponds to our intuitive notion of garbage-free. We only define what it means for
a configuration with purely positive type to be observable. It is likely that this definition
can be extended to encompass negative types as well, but it is not nearly as natural to do so.

A configuration $\mathcal C$ for which there is $\Psi$ composed
entirely of purely positive types such that $\cdot \vDash \mathcal C :: \Psi$
is \emph{observable} at $\Psi$ if,
when we repeatedly receive messages from all channels we know about, starting from a state
where we only know about $\Psi$, we eventually receive a message from every object in $\mathcal C$.
If we do not care about the particular channels in $\Psi$, we may say simply that $\mathcal C$ is
\emph{observable}.

\begin{definition}
We define what it means for a configuration $\mathcal C$ to be observable at $\Psi$
(written $\mathcal C \rhd \Psi$) inductively over the structure of $\mathcal C$.
\begin{enumerate}
\item $\proc(\{c\}, \cdot, x, x.\langle \rangle) \rhd (c : \one)$.
\item If $\mathcal C \rhd \Psi\; (d : A_m^\ell)$, then
$\mathcal C\; \proc(\{c\}, \{d\}, x, x.\ell(d)) \rhd \Psi\; (c : \mathop{\oplus}\limits_{i \in I} A_m^i)$.
\item If $\mathcal C \rhd \Psi\; (d : A_m)$, then
$\mathcal C\; \proc(\{c\}, \{d\}, x, x.\m{shift}(d)) \rhd \Psi\; (c : \down_k^m A_m)$.
\item If $\mathcal C \rhd \Psi\; (d : A_m)\; (e : B_m)$, then
$\mathcal C\; \proc(\{c\}, \{d, e\}, x, x.\langle d, e \rangle) \rhd \Psi\; (c : A_m \otimes B_m)$.
\end{enumerate}
\end{definition}

We can then give the following corollary of our deadlock-freedom theorem:
\begin{corollary}
If $\cdot \vDash \mathcal C :: \Psi$ for some $\Psi$ consisting entirely of purely positive types and $\mathcal C$ cannot
take any steps, then $\mathcal C \rhd \Psi$.
\end{corollary}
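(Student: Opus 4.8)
The plan is to read the corollary as a structural consequence of the Deadlock-Freedom theorem together with the right-to-left analysis afforded by \cref{lem:config-order}. First I would apply Deadlock-Freedom: since $\cdot \vDash \mathcal C :: \Psi$ and $\mathcal C$ cannot step, every process $\proc(S, \Delta, a, P)$ in $\mathcal C$ is poised on its provided channel $a$, so $P$ has the form $a.\uscore$ or $\m{case}\; a\,(\uscore)$. In particular no process is a forwarder, since $a \leftarrow c$ is neither shape. Because the channels of $\Psi$ carry purely positive types, I would then argue that each process in fact \emph{sends} on $a$: a process providing a positive type cannot be poised by receiving on the channel it provides (that would be a negative right rule), so every process is one of the positive message axioms $\oplus R^0$, $\otimes R^0$, $\one R$, or $\downshift R^0$.

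Next I would fix the multiplicity of clients. A positive message is not an identity, so if some process had no clients ($S = \emptyset$) the $\m{drop}$ rule would fire, and if it had two or more ($|S| \ge 2$) the $\m{copy}$ rule would fire; either contradicts the assumption that $\mathcal C$ cannot step. Hence every process carries exactly one alias, $S = \{c\}$, which is precisely the shape demanded by the observability clauses. At this point $\mathcal C$ is exhibited as a forest of single-client positive messages whose roots are the channels of $\Psi$ and whose edges follow continuation channels.

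I would then prove $\mathcal C \rhd \Psi$ by induction on the number of processes, using the $\vDash'$ presentation to peel off the rightmost process $\proc(\{c\}, \overline{\Psi'}, a, P)$. By the $\m{Extend}$ rule its alias $c$ lies in the interface, so $c : A_m$ with $A_m$ purely positive, and $P$ is the positive message for $A_m$; the channels $\Psi'$ that $P$ uses are exactly the immediate components of $A_m$ (e.g. $d : A_m^\ell$ for $\oplus$, or $d : A_m$ and $e : B_m$ for $\otimes$), which are again purely positive. The remainder satisfies $\cdot \vDash' \mathcal C_0 :: \Psi'\,\Psi''$ over a purely positive interface, and it still cannot step: by locality of the multiset rewriting rules any step of $\mathcal C_0$ lifts to a step of $\mathcal C = \mathcal C_0\,\proc(\{c\}, \ldots)$. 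The induction hypothesis gives $\mathcal C_0 \rhd \Psi'\,\Psi''$, and applying the observability clause selected by the top connective of $A_m$ yields $\mathcal C \rhd \Psi''\,(c : A_m) = \Psi$, with the $\one$ case bottoming out at clause (1).

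The hard part will be the bookkeeping that makes each inductive step land exactly on an observability clause. Two points need care. First, the continuation channels $\Psi'$ produced by a positive message must be placed where clauses (2)--(4) expect them, which forces me to treat the typing contexts up to reordering so that the used channels can be moved into the required position. Second, and more delicately, a $\otimes$-message and, more generally, any interface with several roots require me to split $\mathcal C_0$ into independent observable subconfigurations and recombine them; I therefore expect to read clause (1) in its evident general form (adjoining a closed $\one$-message to an already-observable configuration, with the empty configuration as base) so that the clauses compose to cover forests rather than only a single spine. Everything else is a routine inversion on the typing of $P$.
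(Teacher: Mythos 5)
Your proof is correct and takes essentially the same route as the paper's: induction on the configuration typing, using \cref{lem:config-order} to peel off the rightmost process, observing via deadlock-freedom that it is poised and hence (by positivity of its provided type) a positive message axiom over purely positive channels, then applying the inductive hypothesis and the matching observability clause. The two points you flag beyond the paper's sketch---that $\m{drop}$/$\m{copy}$ force $|S|=1$ so the process matches the singleton alias sets in the observability clauses, and that clause (1) must be read in the generalized form (adjoining a closed $\one$-message to an already observable configuration, with the empty configuration as base) for the clauses to compose over configurations with several roots---are both genuine details that the paper leaves implicit and that a complete proof needs.
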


This proof proceeds by a simple induction on the derivation of $\cdot \vDash \mathcal C :: \Psi$,
using (\Cref{lem:config-order}) to work from right to left. At each step, we note that the rightmost process is poised. Because
$\Psi$ consists only of purely positive types, the rightmost process must therefore be sending a positive message. Moreover,
it can only use channels of purely positive type. Well-typedness of the configuration then lets us apply the inductive hypothesis
to the remainder of the configuration, at which point we can simply apply the definition of observability.

\section{Conclusion}

At this point, our formulation of adjoint logic and its operational
semantics seem to provide a good explanation for a variety of
patterns of asynchronous communication. The key behaviors which we can
model (and importantly, model in a uniform fashion) are cancellation,
replication, and multicast. We also obtain a foundation for a system of distributed
garbage collection.  Moreover, if used linearly, our semantics coincides
with the purely linear semantics developed in prior work.

In parallel work we have also provided a shared memory semantics for a
closely related formulation of adjoint logic with implicit structural
rules~\cite{Pfenning18tlla}.  In future work, we plan to investigate
if the declaration of independence is sufficient to allow a
\emph{modular} combination of different operational interpretations
for different modes.  Of particular interest here would be the
semantics with manifest sharing~\cite{Balzer17icfp}.


\section*{Acknowledgments}

Supported by NSF Grant No. CCF-1718267: ``Enriching Session Types for Practical Concurrent Programming''

\bibliographystyle{eptcs}
\bibliography{fp}

\appendix

\section{Adjoint Logic}
\label{app:adjoint-logic}

We present here a brief overview of the formulation of adjoint logic that
we take as a basis for the semantics presented in the main body of the paper.
Adjoint logic can be thought of as a schema to define
particular logics.  The schema is parameterized by a set of modes of
truth $m$, where each proposition and logical connective is indexed by
its mode.  Furthermore, each mode intrinsically carries a set of
structural properties $\sigma(m) \subseteq \{\m{W},\m{C}\}$ where
$\m{W}$ stands for \emph{weakening} and $\m{C}$ stands for
\emph{contraction}.  As a concession to simplicity of the
presentation, in this paper we always allow exchange, although
nothing stands in the way of an even more general framework~\cite{Pfenning16course}.
In addition, an
instance requires a preorder between modes, where $m \geq k$ expresses
that the proof of a proposition of mode $k$ may depend on a hypotheses
of mode $m$.  This preorder embodies the \emph{declaration of
    independence}:
\begin{quote}\it
    A proof of $A_k$ may only depend on hypotheses $B_m$ for $m \geq k$.
\end{quote}
The form of a sequent is
\[
\Psi \vdash A_k \quad \mbox{where $\Psi \geq k$}
\]
where $\Psi$ is a collection of \emph{antecedents} of the form
$(x_i : B^i_{m_i})$ with each $m_i \geq k$, where all the variables
$x_i$ are distinct. This critical presupposition is abbreviated
as $\Psi \geq k$. Furthermore, the order of the antecedents does
not matter since we always allow exchange.

In addition, we require the preorder between modes to be compatible
with their structural properties: that is, $m \geq k$ implies
$\sigma(m) \supseteq \sigma(k)$. This is necessary to guarantee cut
elimination.

Finally, we may define fragments by restricting the set of
propositions we consider for a given mode.

The propositions at each mode are constructed uniformly, remaining
within the same mode, except for the \emph{shift operators} that move
between modes.  They are $\up_k^m A_k$ (pronounced \emph{up}), which is
a proposition at mode $m$ and requires $m \geq k$; and
$\down^\ell_m A_\ell$ (\emph{down}), which is also a proposition at mode
$m$, and which requires $\ell \geq m$.

At this point we can already write out the syntax of propositions.
\[
\begin{array}{lcl}
A_m, B_m & \Coloneqq & p_m \mid A_m \lolli_m B_m \mid A_m \tensor_m B_m \mid \one_m \mid \mathop{\oplus_m}\limits_{i \in I} A_m^i
\mid \mathop{\with_m}\limits_{i \in I} A_m^i \mid \up_k^m A_k \mid \down_m^\ell A_\ell
\end{array}
\]
Here $p_m$ stands for atomic propositions at mode $m$.  Due to the
needs of our operational interpretation, we generalize
internal and external choice to $n$-ary constructors parameterized by
an index set $I$.  So we write
$A_m^1 \oplus A_m^2 = \mathop{\oplus}\limits_{i \in \{1,2\}} A_m^i$.%

Remarkably, the right and left rules in the sequent calculus defining
the logical connectives are the same for each mode and are complemented
by the permissible structural rules.

\begin{figure*}
    \centering
    $\displaystyle
    \begin{array}{c}
    \infer[\id]
    {(x : A_m) \seq A_m}
    {\mathstrut}
    \qquad
    \infer[\cut]
    {\Psi \ \Psi' \seq C_k}
    {\Psi \geq m \geq k
        & \Psi \vdash A_m 
        & (x : A_m)\ \Psi' \vdash C_k
    }
    \\[1em]
    \infer[\weaken]
    {\Psi \ (x : A_m) \seq C_k}
    {\m{\m{W}} \in \sigma(m)
        &\Psi \seq C_k
    }
    \qquad
    \infer[\contract]
    {\Psi \ (x : A_m) \seq C_k}
    {\m{\m{C}} \in \sigma(m)
        &\Psi \ (y : A_m) \ (z : A_m) \seq C_k
    }
    \\[1em]
    \infer[{\oplus} R_\ell]
    {\Psi \seq \mathop{\oplus}\limits_{i \in I} A_m^i}
    {\ell \in I
        & \Psi \seq A_m^\ell}
    \qquad
    \infer[{\oplus} L]
    {\Psi \ (x : \mathop{\oplus}\limits_{i \in I} A_m^i) \seq C_k}
    {\Psi \ (y : A_m^i) \seq C_k \text{ for each $i \in I$}}
    \\[1em]
    \infer[{\with} R]
    {\Psi \seq \mathop{\with}\limits_{i \in I} A_m^i}
    {\Psi \seq A_m^i \text{ for each $i \in I$}}
    \qquad
    \infer[{\with} L_\ell]
    {\Psi\ (x :\mathop{\with}\limits_{i \in I} A_m^i) \seq C_k}
    {\ell \in I
        & \Psi\ (y : A_m^\ell) \seq C_k
    }
    \\[1em]
    \infer[{\otimes} R]
    {\Psi \ \Psi' \seq A_m \otimes B_m}
    {\Psi \seq A_m
        &\Psi' \seq B_m
    }
    \qquad
    \infer[{\otimes} L]
    {\Psi \ (x : A_m \otimes B_m) \seq C_k}
    {\Psi \ (y : A_m) \ (z : B_m) \seq C_k}
    \qquad\qquad
    \infer[{\one} R]
    {\cdot \seq \one_m}
    {\mathstrut}
    \qquad
    \infer[{\one} L]
    {\Psi \ (x : \one_m) \seq C_k}
    {\Psi \seq C_k}
    \\[1em]
    \infer[{\lolli} R]
    {\Psi \seq A_m \lolli B_m}
    {(x : A_m) \ \Psi \seq B_m}
    \qquad
    \infer[{\lolli} L]
    {\Psi \ \Psi' \ (x : A_m \lolli B_m) \seq C_k}
    {\Psi' \geq m
        &\Psi' \seq A_m
        &\Psi \ (y : B_m) \seq C_k
    }
    \\[1em]
    \infer[{\upshift} R]
    {\Psi \seq \upshift_k^m A_k}
    {\Psi \seq A_k}
    \qquad
    \infer[{\upshift} L]
    {\Psi \ (x : \upshift_k^m A_k) \seq C_\ell}
    {k \geq \ell
        &\Psi \ (y : A_k) \seq C_\ell
    }
    \qquad\qquad
    \infer[{\downshift} R]
    {\Psi \seq \downshift_k^m A_m}
    {\Psi \geq m
        &\Psi \seq A_m
    }
    \qquad
    \infer[{\downshift} L]
    {\Psi \ (x : \downshift_k^m A_m) \seq C_\ell}
    {\Psi \ (y : A_m) \seq C_\ell}
    \end{array}
    $
    \caption{Rules of Adjoint Logic}
    \label{fig:basic-logic-rules}
\end{figure*}

\subsection{Judgmental and structural rules}

The rules for adjoint logic can be found in \cref{fig:basic-logic-rules},
in which we give a more standard presentation of the logic than that used
by the operational semantics
(\cref{fig:message-passing-typing-rules}). We begin with the judgmental
rules of identity and cut, which express the connection between
antecedents and succedents.  Identity says that if we assume $A_m$ we
are allowed to conclude $A_m$.  Cut says the opposite: if we can
conclude $A_m$ we are allowed to assume $A_m$ \emph{as long as the
    declaration of independence is respected}.

As is common for the sequent calculus, we read the rules in the
direction of bottom-up proof construction.  For the cut rule, this
means we should assume that the conclusion $\Psi\ \Psi' \vdash C_k$
is well-formed and, in particular, that $\Psi \geq k$ and $\Psi' \geq k$.
Therefore, if we check that $m \geq k$, then we know that the second premise,
$(x : A_m)\ \Psi' \vdash C_k$, will also be well-formed.  For the first
premise to be well-formed, we need to check outright that $\Psi \geq m$.

The structural rules of weakening and contraction just need to verify
that the mode of the principal formula permits the rule.

\subsection{Additive and multiplicative connectives}

The logical rules defining the additive and multiplicative
connectives are simply the linear rules for all modes, since
we have separated out the structural rules.  Except in
one case, ${\lolli}L$, the well-formedness of the conclusion
implies the well-formedness of all premises.

As for ${\lolli}L$, we know from the well-formedness of the conclusion
that $\Psi \geq k$, $\Psi' \geq k$, and $m \geq k$.  These facts by
themselves already imply the well-formedness of the second premise,
but we need to check that $\Psi' \geq m$ in order for the first premise
to be well-formed.

\subsection{Shifts}

The shifts represent the most interesting aspects of the rules.
Recall that in $\up^m_k A_k$ and $\down^m_k A_m$ we require that
$m \geq k$. We first consider the two rules for $\up$.
We know from the conclusion of the right rule that $\Psi \geq m$ and from
the requirement of the shift that $m \geq k$.  Therefore, as $\geq$ is 
transitive, $\Psi \geq k$ and the premise is always well-formed. This
also means (although we do not prove it here) that this rule is
\emph{invertible}.

From the conclusion of the left rule, we know $\Psi \geq \ell$,
$m \geq \ell$, and $m \geq k$.  This does not imply that
$k \geq \ell$, which we need for the premise to be well-formed and
thus needs to be checked.  Therefore, this rule is non-invertible.

The downshift rules are constructed analogously, taking only the
declaration of independence and properties of the preorder $\leq$ as
guidance. Note that in this case the left rule is always applicable
(that is, invertible), while the right rule is non-invertible.

\subsection{Multicut}
\label{ssec:multicut}

Because we have an explicit rule of contraction, cut elimination does
not follow by a simple structural induction.  However, we can follow
Gentzen~\cite{Gentzen35} and allow multiple copies of the same proposition to be
removed by the cut, which then allows a structural induction
argument. In anticipation of the operational interpretation, we have
labeled our antecedents with unique variables, so the generalized form
of cut called \emph{multicut} (see, for example, \cite{Negri01book})
can remove $n \geq 0$ copies.  Of course, such cuts are only legal if
the propositions that are removed satisfy the necessary structural
rules. For $n = 0$, we require that the mode $m$ support weakening.
\[
\infer[\cut(\emptyset)]
{\Psi \ \Psi' \seq C_k}
{\Psi \geq m \geq k
    & \m{W} \in \sigma(m)
    &\Psi \seq A_m
    &\Psi' \seq C_k
}
\]
For $n = 1$, we obtain the usual cut rule and no special requirements
are needed.
\[
\infer[\cut(\{x\})]
{\Psi \ \Psi' \seq C_k}
{\Psi \geq m \geq k
    & \Psi \seq A_m
    & (x : A_m) \ \Psi' \seq C_k
}
\]
For $n \geq 2$, the mode of the cut formula must admit contraction.
\[
\infer[\cut(S \cup \{x, y\})]
{\Psi \ \Psi' \seq C_k}
{\begin{array}[b]{l}
    \m{C} \in \sigma(m) \\
    \Psi \geq m \geq k
    \end{array}
    \; \Psi \seq A_m
    & (S \cup \{x,y\} : A_m) \ \Psi' \seq C_k
}
\]
Here, we have used the abbreviation
$(\{x_1, \dotsc, x_n\} : A_m)$ to stand for $(x_1 : A_m) \ldots (x_n : A_m)$.

Note that each of these rules has a side condition that can be
interpreted informally as stating that the number of antecedents cut
must be compatible with the mode $m$: if there are no
antecedents removed, $m$ must admit weakening, and if we remove two or
more, $m$ must admit contraction. This is exactly $|S| \sim m$ as defined
in \cref{sec:typing}.

This allows us to write down a single rule encompassing all three of the
above cases for multicut:
\[
\infer[\cut(S)]
{\Psi \ \Psi' \seq C_k}
{\Psi \geq m \geq k
    &|S| \sim m
    &\Psi \seq A_m
    & (S : A_m) \ \Psi' \seq C_k
}
\]
Note that the standard cut rule is the instance of the multicut rule where
$|S| = 1$, and so proving multicut elimination for adjoint logic also yields
cut elimination for the standard cut rule.

\subsection{Identity Expansion and Cut Elimination}

We present standard identity expansion and cut elimination results as
evidence for the correctness of the sequent calculus as capturing the
meaning of the logical connectives via their inference rules.
Cut-free proofs will always decompose propositions when read from
conclusion to premise and thus yield a conservative extension
result. Finally, the fine detail of the proof is significant because
the cut reductions, which constitute the essence of the proof, are
the basis for the operational semantics.

\begin{theorem}[Identity Expansion]
    If $\Psi \seq A_m$, then there exists a proof that $\Psi \seq A_m$
    using identity rules only at atomic propositions, which is cut-free
    if the original proof is.
\end{theorem}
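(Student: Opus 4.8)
The plan is to prove identity expansion by a straightforward structural induction on the proposition $A_m$ appearing in the identity. The statement allows us to eliminate non-atomic uses of $\id$ one at a time, so it suffices to show that for any compound $A_m$, the sequent $(x : A_m) \seq A_m$ has a proof whose only identity rules are on the immediate subformulas of $A_m$ (which are then handled by the induction hypothesis, either remaining atomic or being further expanded). Since the statement also promises that the resulting proof is cut-free when the original is, I would be careful to use only logical (left and right) rules in the expansion step, never \cut.

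The core of the argument is the per-connective expansion. For each connective I would exhibit the canonical expansion that applies the right rule and the matching left rule, reducing to identities on subformulas. For instance, for $A_m \tensor B_m$ I would apply $\tensor L$ to the antecedent $(x : A_m \tensor B_m)$, obtaining $(y : A_m)\ (z : B_m) \seq A_m \tensor B_m$, then apply $\tensor R$ to split the succedent, leaving the two goals $(y : A_m) \seq A_m$ and $(z : B_m) \seq B_m$, each closed by the induction hypothesis. The additive connectives $\oplus$ and $\with$ are handled by pairing $\oplus L$ (over all $i \in I$) with $\oplus R_\ell$ on the matching branch, and dually for $\with$; the units $\one$ and the implication $\lolli$ are analogous. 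The shift cases $\up_k^m A_k$ and $\down_m^\ell A_\ell$ are the most delicate: here I would pair $\up R$ with $\up L$ (and $\down L$ with $\down R$), producing a subgoal at the shifted mode ($k$ or $\ell$), and I must verify that the declaration of independence is respected at each intermediate sequent. This is where the preorder side conditions of the shift rules (e.g.\ $m \geq k$ for $\up$, $\ell \geq m$ for $\down$) must be checked so that every sequent in the expansion is well-formed.

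I expect the main obstacle to be precisely these well-formedness checks for the shift connectives rather than the combinatorics of the induction itself. Because the expansion temporarily introduces an antecedent at the shifted mode and a succedent at the original mode, I need to confirm that $\Psi \geq m$-style presuppositions are preserved when passing through a shift; the same reasoning used in the appendix to justify that $\up R$ is invertible and $\up L$ is non-invertible should supply exactly the inequalities required. The cut-freeness clause adds essentially no difficulty, since the expansion introduces no cuts, so if the original proof was cut-free the expanded proof is as well; more precisely, I would phrase the induction so that expanding an identity leaves the rest of the proof untouched, and thus no new cuts are ever created. The remaining routine work is simply to confirm that the mode annotations match between each right rule and its dual left rule, which follows from the fact that both rules for a given connective operate at the same mode.
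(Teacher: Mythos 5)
Your proposal matches the paper's proof: a lemma, by structural induction on $A_m$, that $(x : A_m) \seq A_m$ has a cut-free proof using identity only at atoms (via the canonical right-rule/left-rule expansion for each connective), followed by an induction over the original derivation that replaces each non-atomic identity with its expansion. Your additional attention to the mode side conditions for the shift cases is sound and fills in exactly the detail the paper leaves implicit.
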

\begin{proof}
    We begin by proving that for any formula $A_m$, there is a
    cut-free proof that $(x : A_m) \seq A_m$ using identity rules
    only at atomic propositions. This follows easily from an
    induction on $A_m$.
    
    Now, we arrive at the theorem by induction over the structure
    of the given proof that $\Psi \seq A_m$.
\end{proof}

\begin{theorem}[Cut Elimination]
    \label{thm:cut-elim}
    If $\Psi \seq A_m$, then there is a cut-free proof of $\Psi \seq A_m$.
\end{theorem}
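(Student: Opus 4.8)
The plan is to prove a stronger statement: that the multicut rule $\cut(S)$ of \cref{ssec:multicut} is admissible in the cut-free fragment. That is, I would show that whenever $\mathcal{D} :: \Psi \seq A_m$ and $\mathcal{E} :: (S : A_m)\ \Psi' \seq C_k$ are both cut-free and the side conditions $\Psi \geq m \geq k$ and $|S| \sim m$ hold, there is a cut-free proof of $\Psi\ \Psi' \seq C_k$. Since ordinary $\cut$ is exactly the instance $|S| = 1$, it then suffices to eliminate cuts one at a time, repeatedly applying admissibility to a topmost $\cut$ whose two premises are already cut-free. The reason to work with multicut rather than cut is precisely the presence of explicit contraction: cutting against a contracted copy would otherwise duplicate $\mathcal{D}$ without reducing any natural measure, whereas the set $S$ lets contraction merely adjust the multiplicity of copies while the derivation it sits on strictly shrinks.

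The admissibility proof proceeds by a lexicographic induction on the triple consisting of (i) the size of the cut formula $A_m$, (ii) the derivation $\mathcal{D}$, and (iii) the derivation $\mathcal{E}$, so that the induction hypothesis is available for a strictly smaller cut formula with arbitrary premises, or for the same cut formula with one of the two premises strictly smaller. I would then distinguish the usual families of cases. The \emph{identity} cases, where $\mathcal{D}$ or a principal copy in $\mathcal{E}$ is the rule $\id$, are immediate. The \emph{principal} cases, where both premises have $A_m$ principal, are the logical reductions---one per connective, including the unary shift reductions for $\up$ and $\down$---and each replaces the multicut by multicuts on proper subformulas, decreasing measure (i). The \emph{left-} and \emph{right-commutative} cases permute the multicut past a rule of $\mathcal{D}$ or $\mathcal{E}$ that does not make $A_m$ principal, decreasing (ii) or (iii) while keeping the cut formula fixed.

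The cases that motivate the whole construction are the \emph{structural} ones, where the last rule of $\mathcal{E}$ acts on a copy of $A_m$ drawn from $S$. When that rule is $\weaken$, its premise simply lacks that copy, so we appeal to the induction hypothesis with the smaller set $S$ minus one element; the side condition survives because $\m{W} \in \sigma(m)$ was required to apply the weakening. When it is $\contract$, the premise carries two copies in place of one, enlarging $S$, but the right derivation strictly shrinks, so measure (iii) decreases even though $|S|$ grows---this is exactly why multiplicity must be tracked by a set rather than a single formula. In the boundary case $|S| = 0$, the derivation $\mathcal{D}$ is discarded and $\Psi$ is weakened into $\mathcal{E}$, which is legitimate because $\Psi \geq m$ together with $\m{W} \in \sigma(m)$ and compatibility $\sigma(m) \supseteq \sigma(k)$ along the preorder guarantees that every antecedent in $\Psi$ admits weakening.

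The main obstacle I anticipate is twofold. First, in every reduction one must verify that the \emph{declaration of independence} and the mode conditions $\Psi \geq m \geq k$ survive into each newly formed sequent and subcut; for the shifts this is the delicate point, since the cut formula reappears at a different mode and the relevant inequalities must be re-derived from transitivity of $\geq$ and from $\sigma(m) \supseteq \sigma(k)$. Second, the principal case with $|S| \geq 2$ requires care: only one copy in $S$ is principal in $\mathcal{E}$, so the remaining copies must first be removed by an appeal to the induction hypothesis at the same cut formula but a strictly smaller right derivation, after which the logical reduction on the principal copy lowers measure (i). Because each such sub-reduction reuses part of $\Psi$, the resulting context carries extra copies of $\Psi$ that must be reconciled by contraction---valid precisely because $|S| \geq 2$ forces $\m{C} \in \sigma(m)$, which propagates to all of $\Psi$ via $\Psi \geq m$. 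Organizing these nested appeals so that the lexicographic measure provably decreases, while keeping the structural side conditions intact, is the crux of the argument; the remaining commutative cases are routine permutation bookkeeping.
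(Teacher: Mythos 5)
Your proposal matches the paper's proof: the paper likewise first establishes admissibility of multicut in the cut-free system by a nested induction on the cut formula $A_m$ and then simultaneously on the two deductions $\DD$ and $\EE$ (following Gentzen's ``Mischung''), and then derives cut elimination by a simple structural induction that invokes this admissibility at each cut. Your additional detail on the structural and mode-condition cases is a correct elaboration of what the paper leaves implicit.
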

\begin{proof}
This proof follows the structure of many cut-elimination results.
First we prove admissibility of multicut in the cut-free system.
This is established by a straightforward nested induction,
first on the proposition $A_m$ and then simultaneously on the structure
of the deductions $\DD$ and $\EE$.  This is followed by a simple structural induction to prove
cut elimination, using the admissibility of (multi)cut when it is
encountered.  If we ignore the modes, this proof is very similar to
the original proof of Gentzen~\cite{Gentzen35}.
\end{proof}

\begin{corollary}
    Adjoint logic is a conservative extension of each of the logics at a
    fixed mode. That is, if $\Psi \seq A_m$ is a sequent purely at mode
    $m$ (in that every type in $\Psi$ is at mode $m$ and neither $A_m$
    nor the types in $\Psi$ make use of shifts), then $\Psi \seq A_m$ is
    provable using the rules of adjoint logic iff it is provable using
    the rules which define the logic at mode $m$.
\end{corollary}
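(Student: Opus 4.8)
The plan is to prove the two directions separately, resting the forward (``only if'') direction on cut elimination together with the subformula property of the cut-free calculus. First I would dispatch the easy (``if'') direction, which requires no translation at all: any derivation built solely from the rules defining the logic at mode $m$ is already a derivation in adjoint logic. Indeed, by design the left and right rules for $\lolli$, $\tensor$, $\one$, $\oplus$, and $\with$ are literally identical in the two systems; the $\id$ and $\cut$ rules specialize to mode $m$; and the structural rules $\weaken$ and $\contract$ acting on a mode-$m$ antecedent are available in adjoint logic precisely because their side conditions $\m{W} \in \sigma(m)$ and $\m{C} \in \sigma(m)$ coincide with the availability of weakening and contraction in the logic at mode $m$. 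So such a derivation is a valid adjoint-logic derivation verbatim.

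For the ``only if'' direction, I would start from a given adjoint-logic derivation of $\Psi \seq A_m$ and apply \cref{thm:cut-elim} to obtain a cut-free adjoint-logic derivation $\DD$ of the same sequent. The key structural observation I would then isolate is a closure property: the set of shift-free propositions at mode $m$ is closed under subformulas. This holds because every connective other than a shift combines propositions only within a single mode, so each immediate subformula of a shift-free mode-$m$ proposition is again a shift-free proposition at mode $m$; shifts are the only mechanism for escaping mode $m$, and by hypothesis none occur in $\Psi$ or $A_m$.

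Next I would invoke the subformula property of the cut-free system, which is immediate by inspection of the rules in \cref{fig:basic-logic-rules}: reading each rule from conclusion to premises, every proposition occurring in a premise is a subformula of a proposition occurring in the conclusion, while $\weaken$ and $\contract$ merely delete or duplicate antecedents and $\id$ has no premises. Combining this with the closure property shows that every sequent appearing in $\DD$ mentions only shift-free propositions at mode $m$. Consequently no shift rule ($\upshift R$, $\upshift L$, $\downshift R$, $\downshift L$) can occur in $\DD$, since its principal formula would be a shift; every logical rule used acts on one of the mode-$m$ connectives $\lolli, \tensor, \one, \oplus, \with$; and every use of $\weaken$ or $\contract$ is on a mode-$m$ antecedent, legitimated by $\m{W} \in \sigma(m)$ or $\m{C} \in \sigma(m)$. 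Hence $\DD$ is literally a derivation in the logic at mode $m$, which establishes provability there and completes the direction.

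The hard part will not be conceptual but a matter of making the subformula argument airtight, and this is what I would regard as the main (if light) obstacle. One must confirm that every rule of \cref{fig:basic-logic-rules} genuinely enjoys the subformula property and, crucially, that the notion of subformula being used tracks modes correctly, so that a subformula of a mode-$m$ proposition remains at mode $m$ unless a shift is crossed. Once this bookkeeping is in place, the corollary reduces to cut elimination plus the single observation that the shift-free mode-$m$ fragment is subformula-closed.
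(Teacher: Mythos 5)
Your proposal is correct and follows essentially the same route as the paper, which derives the corollary from Theorem~\ref{thm:cut-elim} together with the observation that ``cut-free proofs will always decompose propositions when read from conclusion to premise,'' i.e.\ the subformula property combined with the fact that the shift-free mode-$m$ fragment is closed under subformulas. Your additional care about the easy direction and about modes being preserved by subformulas is exactly the bookkeeping the paper leaves implicit.
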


\subsection{Adjunction properties}
\label{ssec:adjunction}

As yet, we have not discussed the meaning of the name
``\kern-0.1em\textit{adjoint logic}''.  This can be justified by showing that for
fixed $k \le m$, $\downshift_k^m$ and $\upshift_k^m$ yield an
adjoint pair of functors $\downshift_k^m \dashv \upshift_k^m$.  Since
prior results (see~\cite{Benton94csl} and \cite{Licata17fscd})
already establish this property and we have little new to contribute
here, we omit the details here.

\section{Asynchronous Adjoint Logic}
\label{app:async-adjoint-logic}

As has been observed before, intuitionistic and classical linear
logics can be put into a Curry--Howard correspondence with
session-typed communicating
processes~\cite{Caires10concur,Wadler12icfp,Caires16mscs}.  A linear
logical proposition corresponds to a session type, and a sequent proof
to a process expression. The transition rules of the operational
semantics derive from the cut reductions.

Under the intuitionistic interpretation a sequent proof\footnote{for now on
    the linear fragment, and also labeling the succedent with a fresh
    variable} of
\[
(x_1 : A_\mL^1)\cdots(x_n : A_\mL^n) \vdash (x : A_\mL)
\]
corresponds to a process $P$ that \emph{provides} channel $x$ and uses
channels $x_i$.  The types of the channels prescribe the pattern of
communication: in the succedent, positive types
(${\oplus}, {\tensor}, {\one}$) will send and negative types
(${\with}, {\lolli}$) will receive.  In the antecedent, the roles are
reversed.  Cut corresponds to parallel composition of two processes,
with a private channel between them, while identity simply equates two
channels.

\subsection{Enforcing Asynchronous Communication}
\label{ssec:async}

Under this interpretation, a cut of a right rule against a matching
left rule allows computation to proceed by mimicking the cut reduction
from the proof of Theorem~\ref{thm:cut-elim}.  For example, a cut at
type $\mathop{\oplus}\limits_{i \in I} A_\mL^i$ is replaced by a cut at type
$A_\mL^\ell$ for some $\ell \in I$.  This corresponds to passing a message
(`$\ell$') from the process \emph{providing}
$x : \mathop{\oplus}\limits_{i \in I} A_\mL^i$ to the process \emph{using}
$x$.  By its very nature, this form of cut reduction is
\emph{synchronous}: both provider and client proceed simultaneously
because the channel $x : A_\ell$ connects the two process continuations.

For realistic languages, and also for the paradigm to smoothly extend
to the case of adjoint logic where some modes permit weakening and
contraction, we would like to prescribe \emph{asynchronous
    communication} instead.

We observe that the \emph{asynchronous $\pi$-calculus} replaces
the usual action prefix for output $x\langle y\rangle. P$ by a process
expression $x\langle y\rangle$ \emph{without a continuation}, thereby ensuring
that communication is asynchronous.  Such a process represents the
message $y$ sent along channel $x$.  Under our interpretation, the
continuation process corresponds to the proof of the premise of a
rule.  Therefore, if we can restructure the sequent calculus so that
the rules that send (${\oplus}R$, ${\one}R$, ${\tensor}R$, ${\down}R$,
${\with}L$, ${\lolli}L$, ${\up}L$) have zero premises, then we may
achieve a similar effect.

As an example, we consider the two right rules for ${\oplus}$.
Reformulated as axioms, they become
\[
\infer[{\oplus}R^0_1]
{A \vdash A \oplus B}
{} 
\hspace{3em}
\infer[{\oplus}R^0_2]
{B \vdash A \oplus B}
{} 
\]
In the presence of cut, these two rules together produce the same
theorems as the usual two right rules.  In one direction, we use
cut
\[
\infer[\m{cut}_A]
{\Delta \vdash A \oplus B}
{\Delta \vdash A
    & \infer[{\oplus}R^0_1]
    {A \vdash A \oplus B}
    {}}
\hspace{3em}
\infer[\m{cut}_B]
{\Delta \vdash A \oplus B}
{\Delta \vdash B
    & \infer[{\oplus}R^0_2]
    {B \vdash A \oplus B}
    {}}
\]
and in the other direction we use identity
\[
\infer[{\oplus}R_1]
{A \vdash A \oplus B}
{\infer[\m{id}_A]{A \vdash A}{}}
\hspace{3em}
\infer[{\oplus}R_2]
{B \vdash A \oplus B}
{\infer[\m{id}_B]{B \vdash B}{}}
\]
to derive the other rules.

Returning to the $\pi$-calculus, instead of explicitly \emph{sending}
a message $a\langle b\rangle.\, P$ we \emph{spawn} a new process in
parallel $a\langle b\rangle \mid P$.  This use of parallel composition
corresponds to a cut; receiving a message is achieved by cut reduction:
\[
\infer[\m{cut}_{A \oplus B}]
{\Delta', A \vdash C}
{\infer[{\oplus}R^0_1]{A \vdash A \oplus B}{\mathstrut}
    & \infer[{\oplus}L]
    {\Delta', A \oplus B \vdash C}
    {\deduce[Q_1]{\Delta', A \vdash C}{}
        & \deduce[Q_2]{\Delta', B \vdash C}{}}}
\quad \Longrightarrow \quad
\deduce[Q_1]{\Delta', A \vdash C}{}
\]
We see the cut reduction completely eliminates the cut in one step,
which corresponds precisely to receiving a message.  In this example
the message would be $\pi_1$ since the axiom ${\oplus}R^0_1$ was used;
for ${\oplus}R^0_2$ it would be $\pi_2$.

In summary, if we restructure the sequent calculus so that the
non-invertible rules (those that send) have zero premises, then (1)
messages are proofs of axioms, (2) message sends are modeled by cut,
and (3) message receives are a new form of cut reduction with a single
continuation.

In the process we give something up, namely the traditional cut
elimination theorem.  For example, the sequent
$\cdot \vdash \one \oplus \one$ has no cut-free proof since no rule
matches this conclusion.  The saving grace is that we can reach a
normal form where each cut just simulates the usual rules of the
sequent calculus.  This can be shown by translation to the ordinary
sequent calculus, applying cut elimination, and translating the result
back.  Proofs in this normal form have the subformula property.
Perhaps more importantly, we have session fidelity and deadlock
freedom (\cref{sec:metatheory}) for the corresponding process calculus
even in the presence of recursive types and processes, which is ultimately
what we care about for the resulting concurrent programming language.

\subsection{Eliminating Weakening and Contraction}
\label{ssec:structural-elimination}

We have introduced multicut entirely with the standard motivation of
providing a simple proof of the admissibility of cut using structural
induction. Surprisingly, we can streamline the system further by
using multicut to eliminate weakening and contraction from the logic
altogether, as in the system we use as the basis for our typing rules
(\cref{fig:message-passing-typing-rules}).

Consider a mode $m$ with $\m{C} \in \sigma(m)$.  Then
contraction is a simple instance of multicut with
an instance of the identity rule.
\[
\infer[\cut(\{y, z\})]
{\Psi \ (x : A_m) \seq C_k}
{\infer[\id]
    {(x : A_m) \seq A_m}
    {\mathstrut}
    & \Psi \ (y : A_m) \ (z : A_m) \seq C_k
}
\]
Similarly, for a mode $m$ with $\m{W} \in \sigma(m)$,
weakening is also an instance of multicut.
\[
\infer[\cut(\emptyset)]
{\Psi \ (x : A_m) \seq C_k}
{\infer[\id]
    {(x : A_m) \seq A_m}
    {\mathstrut}
    & \Psi \seq C_k}
\]
Cut reductions in the presence of contraction entail many residual
contractions, as is evident already from Gentzen's original proof.
Under our interpretation of contraction above, these residual
contractions simply become multicuts with the identity.  The
operational interpretation of identities then plays three related
roles: with one client, an identity achieves a renaming, redirecting
communication; with two or more clients, an identity implements
copying; with zero clients, its effect is cancellation or garbage collection.
The central role of identities can be seen in full detail in
Figure~\ref{fig:dynamics1}, once we have introduced our notation for
processes and process configurations.

\section{Program Examples}
\label{sec:example}

In the examples that follow, we will work with two modes, $\mL$ and $\mU$, with
$\mL < \mU$, $\sigma(\mL) = \emptyset$, and $\sigma(\mU)~=~\{W, C\}$.
In these examples we also use recursively defined types and processes
without formally defining these constructs, since they are well-known
from the literature and orthogonal to our concerns (see, for example,
\cite{Toninho13esop}).

\subsection{Example: Circuits}
\label{sec:circuits}

We call channels $c_\mU$ that are subject to weakening and contraction
\emph{shared channels}. As an example that requires shared channels we
use circuits.  We start by programming a nor gate that processes
infinite streams of zeros and ones.
\begin{tabbing}
    $\m{bits}^\infty_\mU = {\oplus}\{\m{b0} : \m{bits}^\infty_\mU, \m{b1} : \m{bits}^\infty_\mU\}$ \\[1ex]
    $x : \m{bits}^\infty_\mU, y : \m{bits}^\infty_\mU \vdash \mi{nor} :: (z : \m{bits}^\infty_\mU)$ \\[1ex]
    $z \leftarrow \mi{nor} \leftarrow x,y =$ \\
    \quad $\m{case}\, x\,$
    \= $(\,\m{b0}(x') \Rightarrow \m{case}\, y\,$ \= $(\, \m{b0}(y') \Rightarrow$
    \= $z' \leftarrow z.\m{b1}(z') \semi$ \\
    \>\>\> $z' \leftarrow \mi{nor} \leftarrow x',y'$ \\
    \>\> $\mid \m{b1}(y') \Rightarrow z' \leftarrow z.\m{b0}(z') \semi$ \\
    \>\>\> $z' \leftarrow \mi{nor} \leftarrow x',y'\,)$ \\
    \> $\mid \m{b1}(x') \Rightarrow \m{case}\, y\,$ \= $(\, \m{b0}(y') \Rightarrow$
    \= $z' \leftarrow z.\m{b0}(z') \semi$ \\
    \>\>\> $z' \leftarrow \mi{nor} \leftarrow x',y'$ \\
    \>\> $\mid \m{b1}(y') \Rightarrow z' \leftarrow z.\m{b0}(z') \semi$ \\
    \>\>\> $z' \leftarrow \mi{nor} \leftarrow x',y'\,)\,)$
\end{tabbing}
This is somewhat verbose, but note that all channels here are shared.
For this particular gate they could also be linear because they are
neither reused nor canceled.
This illustrates that programming can be uniform at different modes,
which is a significant advantage of our system over systems of session types
based on linear logic with an exponential ${!}A$.
Our implementation of $\mi{nor}$ has the property
that for bits $A$, $B$, and $C$ with $C = \lnot (A \lor B)$, the
following transitions are possible and characterize $\mi{nor}$:
\begin{tabbing}
    $\proc(\{a\}, \{a'\}, a, a.A(a')), \proc(\{b\}, \{b'\}, b, b.B(b')),
    \proc(S, \{a, b\}, c, c \leftarrow \mi{nor} \leftarrow a,b)$ \\
    $\null \longrightarrow^*
    \proc(c', \{a', b'\}, c', c' \leftarrow \mi{nor} \leftarrow a',b'),
    \proc(S, \{c'\}, c, c.C(c'))$ \quad ($c'$ fresh)
\end{tabbing}

This multi-step reduction is shown in full (one step at a time) below.
We only show the initial portion of each process term, which is enough to disambiguate
where in the program we are, as otherwise process terms become unwieldy
and reduce clarity. We also assume the existence of a rule $\m{call}$ that lets us invoke
a defined process, replacing the call with the process definition, after appropriate
substitution. At each step, we have highlighted in red the process(es) that are about to
transition.

\[
\begin{array}{rc}
\proc(\{a\}, \{a'\}, a, a.A(a')), \proc(\{b\}, \{b'\}, b, b.B(b')),
{\color{red} \proc(S, \{a, b\}, c, c \leftarrow \mi{nor} \leftarrow a,b)}
&\overset{\m{call}}{\Longrightarrow} \\[.5em]

{\color{red} \proc(\{a\}, \{a'\}, a, a.A(a'))}, \proc(\{b\}, \{b'\}, b, b.B(b')),
{\color{red} \proc(S, \{a, b\}, c, \m{case}\; a \ldots)}
&\overset{{\oplus}\; C}{\Longrightarrow} \\[.5em]

{\color{red} \proc(\{b\}, \{b'\}, b, b.B(b'))},
{\color{red} \proc(S, \{a', b\}, c, \m{case}\; b \ldots)}
&\overset{{\oplus}\; C}{\Longrightarrow} \\[.5em]

{\color{red} \proc(S, \{a', b'\}, c, z' \leftarrow \ldots)}
&\overset{\cut(\{z'\})}{\Longrightarrow} \\[.5em]

\proc(\{c'\}, \{a', b'\}, z', z' \leftarrow \mi{nor} \leftarrow a', b'), \proc(S, \{c'\}, c, c.C(c'))
\end{array}
\]

When we build an or-gate out of a nor-gate we need to exploit sharing
to implement simple negation. In the example below, $u$ and $u'$ are both
names for the same shared channel. The process invoked as
$\mi{nor} \leftarrow x, y$ will multicast a message to the clients of
$u$ and $u'$.
\begin{tabbing}
    $x : \m{bits}^\infty_\mU, y : \m{bits}^\infty_\mU \vdash \mi{or} :: (z : \m{bits}^\infty_\mU)$ \\[1ex]
    $z \leftarrow \mi{or} \leftarrow x,y =$ \\
    \quad $\{u, u'\} \leftarrow \mi{nor} \leftarrow x,y$ \\
    \quad $z \leftarrow \mi{nor} \leftarrow u,u'$
\end{tabbing}
An analogous computation to the above is possible, except that
at an intermediate stage of the computation, we will also have
a shared channel $d$ carrying the (multicast) message
$\proc(\{u, u'\}, \{d'\}, d, d.D(d'))$ with $D = \lnot(A \lor B)$.

\subsection{Example: Map}
\label{sec:map}

Mapping a process over a list allows us to demonstrate the use of replicable
services, as well as cancellation.
We define a whole family of types indexed by a type $A$, which is not
formally part of the language but is expressed at the metalevel.
\begin{tabbing}
    $\mi{list}_A = {\oplus}\{\m{cons} : A \tensor \mi{list}_A, \m{nil} : \one\}$
\end{tabbing}
Such a list should not be viewed as a data structure in memory.
Instead, it is a behavioral description of a stream of messages.
A process that maps a channel of type $A$ to one of type $B$ will
itself have type $A \lolli B$. However, this process must be
shared since it needs to be applied to every element.  We
therefore obtain the following type and definition, where all
channels not annotated with a mode subscript are at mode $\mL$.
\begin{tabbing}
    $\m{f}_\mU : \up_\mL^\mU(A_\mL \lolli B_\mL), l : \mi{list}_A \vdash \mi{map} :: (k : \mi{list}_B)$ \\
    $k \leftarrow \mi{map} \leftarrow f_\mU, l =$ \\
    \quad $\m{case}\, l\,$ \= $(\, \m{cons}(l') \Rightarrow$
    \= $\m{case}\; l' (\langle x, l''\rangle \Rightarrow$
    \hspace{6em}\= \% receive element $x:A$ with continuation $l''$ \\
    \>\> $\{f_\mU', f_\mU''\} \leftarrow (\nu a)a \leftarrow f_\mU$ \> \% duplicate the channel $f_\mU$ \\
    \>\> $f' \leftarrow f_\mU'.\m{shift}(f') \semi$ \> \% obtain a fresh linear instance $f'$ of $f_\mU'$  \\
    \>\> $y \leftarrow f'.\langle x,y\rangle \semi $ \> \% send $x$ to $f'$, response will be along fresh $y$ \\
    \>\> $k' \leftarrow k.\m{cons}(k') \semi$ \> \% select $\m{cons}$ \\
    \>\> $k'' \leftarrow k'.\langle y,k''\rangle \semi$ \> \% send $y$ with continuation $k''$ \\
    \>\> $k'' \leftarrow \mi{map} \leftarrow f_\mU'', l'')$ \> \% recurse with continuation channels \\
    \> $\mid \m{nil}(l') \Rightarrow$ \> $\emptyset \leftarrow (\nu a)a \leftarrow f_\mU$ \> \% Cancel the channel $f_\mU$ \\
    \>\> $k' \leftarrow k.\m{nil}(k') \semi$ \> \% select $\m{nil}$ \\
    \>\> $\m{case}\; l' (\langle \rangle \Rightarrow$ \> \% wait for $l'$ to close \\
    \>\> $k'.\langle\,\rangle \,))$ \> \% close $k'$ and terminate
\end{tabbing}
In this example, $f_\mU$ is a replicable and cancelable service. In the
case of a nonempty list, we create two names for the channel $f_\mU$ --- one
to use immediately and one to pass to the recursive call. Note that the service
itself remains a single service with two clients until the message $\m{shift}(f')$
is sent to it, at which point it replicates itself, creating one copy to handle
this request and leaving another to deal with future requests. In the case of an
empty list, we have no elements to map over, and so we do not need to use $f_\mU$.
As such, we cancel it before continuing.

\end{document}